\documentclass[journal]{IEEEtran}

\ifCLASSINFOpdf
\usepackage[pdftex]{graphicx}
\DeclareGraphicsExtensions{.pdf,.jpeg,.png}
\else
\usepackage[dvips]{graphicx}
\fi
\usepackage{epstopdf}
\usepackage{amssymb,amsmath,amsthm}
\usepackage{wasysym}
\usepackage{algorithm}
\usepackage{algorithmic}
\usepackage{array}
\usepackage{cite}
\usepackage{color}
\usepackage{url}

\usepackage{epsfig,latexsym}
\usepackage{flushend}
\usepackage{verbatim}
\usepackage{amsopn}
\usepackage{booktabs}

\usepackage{stfloats}
\usepackage{enumerate}
\usepackage{hyperref}
\usepackage{subfigure}
\usepackage{caption}
\usepackage{bm}
\usepackage{mathtools}

\newtheorem{theorem}{Theorem}

\renewcommand{\baselinestretch}{0.95}

\begin{document}
	
	\title{3-D Trajectory Optimization for Robust Direction Sensing in Movable Antenna Systems}
	
	\author{{Wenyan Ma, \IEEEmembership{Member, IEEE}, Lipeng Zhu, \IEEEmembership{Senior Member, IEEE}, Xiaodan Shao, \IEEEmembership{Member, IEEE}, and  Rui Zhang, \IEEEmembership{Fellow, IEEE}}
		\thanks{
			\textit{(Corresponding authors: Lipeng Zhu; Xiaodan Shao.)}}
		\thanks{W. Ma and R. Zhang are with the Department of Electrical and Computer Engineering, National University of Singapore, Singapore 117583 (e-mail: wenyan@u.nus.edu, elezhang@nus.edu.sg).}
		\thanks{L. Zhu is with the State Key Laboratory of CNS/ATM and the School of Interdisciplinary Science, Beijing Institute of Technology, Beijing 100081, China (e-mail: zhulp@bit.edu.cn).}
		\thanks{X. Shao is with the Department of Electrical and Computer Engineering, University of Waterloo, Waterloo, ON N2L 3G1, Canada (e-mail: x6shao@uwaterloo.ca).}
		}
	\maketitle
	
	\begin{abstract}
			This paper presents a novel wireless sensing system where a movable antenna (MA) continuously moves and receives sensing signals within a three-dimensional (3-D) region to enhance sensing performance compared with conventional fixed-position antenna (FPA)-based sensing. We show that the performance of direction vector estimation for a target is fundamentally related to the 3-D MA trajectory in terms of the mean square angular error lower-bound (MSAEB), which is adopted as a coordinate-invariant performance metric. In particular, the  closed-form expression of the MSAEB is derived as a function of the trajectory covariance matrix. Theoretical analysis shows that two-dimensional (2-D) antenna movement suffers from performance divergence for target direction close to the endfire direction of the 2-D MA plane, whereas 3-D movement can achieve isotropic sensing performance over the entire angular region. To achieve robust sensing performance, we formulate a min–max optimization problem to minimize the maximum (worst-case) MSAEB over a given continuous angular region wherein the target is located. An efficient successive convex approximation (SCA) algorithm is developed to optimize the 3-D MA trajectory and obtain a locally optimal solution. Numerical results demonstrate that the proposed 3-D MA sensing scheme is able to significantly reduce the worst-case mean square angular error (MSAE) compared with conventional arrays with FPAs and MA systems with 2-D movement only, thus achieving more accurate and robust direction estimation over the entire angular region.
	\end{abstract}
	
	\begin{IEEEkeywords}
		Wireless sensing, movable antenna (MA), mean square angular error (MSAE), direction estimation, antenna trajectory optimization.
	\end{IEEEkeywords}
	
	\section{Introduction}
	The sixth-generation (6G) mobile communication systems are expected to support a wide range of location-critical services, including autonomous driving, robotic navigation, and drone coordination \cite{jiang2021the}. These applications require high-resolution sensing capabilities beyond conventional metrics of communication throughput and link reliability. This paradigm shift has motivated the development of integrated sensing and communication (ISAC), a framework that jointly exploits spectral and hardware resources to enable both communication and sensing. In future 6G networks, wireless sensing (e.g., detection and parameter estimation of physical targets) is anticipated to evolve from an auxiliary function into a fundamental network service.
	
	To achieve high angular resolution and effective beamforming for sensing enhancement, conventional radars or base stations (BSs) typically require large-scale antenna arrays \cite{mailloux2005phased}. However, the increasing number of antenna elements and associated radio frequency (RF) chains incur substantial hardware cost and power consumption, limiting the feasibility of cost-efficient sensing solutions. To reduce the hardware cost, sparse antenna configurations have been proposed to maintain angular resolution with fewer antenna elements by effectively expanding the aperture \cite{roberts2011sparse}. However, these designs generally rely on fixed-position antennas (FPAs), which suffer from obstinate sidelobes and lack of the flexibility to dynamically adapt their geometry to varying sensing requirements. Moreover, in both large-scale and sparse arrays, FPAs cannot fully harness the spatial degrees of freedom (DoFs) available within the transceiver’s deployment region.
	
	To overcome the inherent limitations of FPA-based sensing, recent works have explored a new sensing architecture based on movable antenna (MA) \cite{zhu2023MAMag,zhu2025tutorial}, where the transmitter and/or receiver can dynamically adjust the antenna's position. This additional DoF enables substantial performance gains over conventional FPA arrays, with the same or even smaller number of antenna elements. In particular, enlarging the antenna movement region effectively increases the effective aperture, thereby improving angular resolution. Moreover, by optimizing the MA trajectory, the mutual correlation between steering vectors corresponding to different directions can be reduced, which suppresses sidelobes/interference and alleviates estimation ambiguity. In practice, the MA trajectory can be pre-programmed for specific sensing tasks or adaptively optimized in real time to accommodate dynamic sensing requirements.
	
	Research on MA-enabled  wireless systems dates back to 2009, when adjusting a single antenna’s position within a given region was shown to achieve significant diversity gains for wireless communication \cite{zhao2009single}. In recent years, MAs have been regarded as a transformative paradigm in modern multi-antenna or multiple-input multiple-output (MIMO) systems, demonstrating superior performance over FPAs in various scenarios. Studies in \cite{zhu2022MAmodel,mei2024movable,ning2024movable,tang2024secure} showed that optimizing antennas' positions over a wavelength-scale movement region can significantly enhance the received signal-to-noise ratio (SNR) under diverse channel conditions. Beyond SNR improvements, extensive research has focused on multiuser interference mitigation via joint antenna position and beamforming optimization \cite{zhu2023MAmultiuser,wu2023movable,qin2024antenna,cheng2023sum,yang2024flexible,hu2024power,li2024minimizing}. Furthermore, the spatial multiplexing performance of MA-enabled  MIMO systems and the corresponding channel estimation methods have been thoroughly investigated in \cite{ma2022MAmimo,chen2023joint,yeyuqi2023fluid,ma2023MAestimation,xiao2023channel}. The benefits of MAs have also been validated in multi-beamforming, satellite communications, and near-field scenarios \cite{zhu2023MAarray,ma2024multi,ZhuLP_satellite_MA,zhu2024nearfield}. As an extension, six-dimensional MA (6DMA) architectures were proposed in \cite{shao20246DMA,shao2024discrete,shao2024Mag6DMA,shao2024exploiting} to further incorporate three-dimensional (3-D) rotational DoFs in addition to spatial translation.
	
	The advantages of MAs have also been validated in wireless sensing systems. Early experimental studies demonstrated that adjusting antennas' positions can enhance radar imaging quality and improve target localization accuracy \cite{zhuravlev2015experi,hinske2008using}. However, while these pioneering efforts verified the practical benefits of antenna movement, they did not reveal the theoretical connection between antenna movement and sensing performance. To bridge this gap, subsequent works have optimized MA array geometries by minimizing the Cramér–Rao bound (CRB) for angle-of-arrival (AoA) estimation in both far-field and near-field scenarios \cite{ma2024MAsensing,chen2025MAISACopt,wang2025MAnearsensing,mao2025movable}. These studies derived closed-form CRB expressions as explicit functions of antennas' positions to facilitate antenna position optimization. The framework was further generalized to 6DMA architectures via joint antenna position and orientation optimization, enabling sum CRB minimization for targets distributed across multiple spatial sectors \cite{shao2024exploiting}. These studies designed the MA array geometry for a specific sensing objective and the geometry remains fixed throughout the sensing process. Moreover, the work in \cite{ma2025movabletra} exploited the joint time-spatial DoFs enabled by antenna movement. By continuously moving the antenna while receiving echo signals, a large-aperture virtual array can be synthesized, thereby achieving high angular resolution without incurring grating lobes. However, this study only considered one-dimensional (1-D) and two-dimensional (2-D) antenna movement. When the target direction is close to the endfire of the antenna movement axis or plane, the resulting effective aperture of the virtual MA array becomes significantly reduced, leading to degraded sensing performance. Consequently, 1-D and 2-D antenna movement cannot guarantee robust sensing performance for arbitrary target directions in 3-D space.
	
	To overcome this limitation, we investigate in this paper an MA-enabled  sensing system that fully exploits the spatial DoFs enabled by 3-D antenna movement. In contrast to existing studies that primarily optimize static MA array geometries \cite{ma2024MAsensing,chen2025MAISACopt,wang2025MAnearsensing,shao2024exploiting}, we consider the scenario where the antenna continuously moves while receiving echo signals. By optimizing the MA trajectory in a 3-D region, a large continuous virtual aperture is synthesized, thereby enhancing sensing accuracy. Furthermore, unlike the 1-D or 2-D trajectory designs considered in \cite{ma2025movabletra}, which may experience effective aperture collapse for certain target directions, we consider the 3-D trajectory optimization to ensure robust sensing performance for arbitrary directions over the entire angular region. The main contributions of this work are summarized as follows:
	
	\begin{itemize}
		\item First, to establish a robust and coordinate-invariant performance metric, we consider the mean square angular error (MSAE) for target direction vector estimation. We derive a closed-form expression for the MSAE lower-bound (MSAEB), which is characterized as a function of the covariance matrix of the 3-D MA trajectory and the target direction vector. To ensure robust sensing performance across different directions, we formulate a maximum (worst-case) MSAEB minimization problem over a continuous angular region via 3-D MA trajectory optimization.
		\item Next, we present a comprehensive theoretical analysis of 3-D MA-enabled  sensing. We prove that antenna movement along the target direction provides no estimation performance gain, and rigorously show that 2-D planar trajectories inevitably suffer from performance divergence for target directions near the endfire direction of the movement plane. Moreover, we characterize the isotropy property of the MSAEB, proving that constant sensing performance across all directions is achieved if and only if the MA trajectory's covariance matrix is a scaled identity matrix, based on which practical isotropic 3-D MA trajectories are designed.
		\item Furthermore, we formulate a MA trajectory optimization problem to minimize the worst-case MSAEB over a given continuous angular region wherein the target is located, subject to the practical constraints on the MA's maximum speed and 3-D movement region. To deal with the non-convex and intractable objective function, we discretize the angular region and thereby develop an efficient algorithm based on successive convex approximation (SCA). For the special case of a single target direction in the angular region, a simplified algorithm is developed, which only needs to optimize the MA trajectory in the 2-D plane orthogonal to the target direction vector.
		\item Finally, numerical results are provided to evaluate the proposed 3-D antenna movement solutions for robust direction sensing. The results show that the proposed scheme significantly outperforms benchmark solutions based on FPAs or MAs with 2-D movement only in terms of estimation MSAE. Furthermore, the optimized 3-D MA trajectories effectively enlarge the virtual spatial aperture, thereby ensuring robust and high-accuracy sensing over the entire angular region.
	\end{itemize}
	
	The rest of this paper is organized as follows. Section II introduces the system model and derives the closed-form MSAEB for target direction vector estimation. Section III provides a detailed performance analysis. In Section IV, an SCA-based optimization algorithm is proposed to solve the 3-D trajectory optimization problem. Section V presents numerical results to validate the proposed scheme, and Section VI concludes the paper.
	
	\textit{Notations}: Boldface lowercase and uppercase letters represent vectors and matrices, respectively. The operators $(\cdot)^{\mathsf *}$, $(\cdot)^{\mathsf T}$, and $(\cdot)^{\mathsf H}$ represent the conjugate, transpose, and the conjugate transpose (Hermitian), respectively. $\mathbb{C}^{P \times Q}$ and $\mathbb{R}^{P \times Q}$ denote the spaces of $P \times Q$ complex-valued and real-valued matrices, respectively. The $p$th entry of vector $\bm{a}$ is denoted by $\bm{a}[p]$, and the entry of matrix $\bm{A}$ in its $p$th row and $q$th column is denoted by $\bm{A}[p,q]$. The $N \times N$ identity matrix is represented by $\bm{I}_N$. $\text{Tr}(\bm{A})$ denotes the trace of matrix $\bm{A}$. Finally, $\|\bm{a}\|_2$ represents the $2$-norm of vector $\bm{a}$.

	\begin{figure}[!t]
		\centering
		\includegraphics[width=85mm]{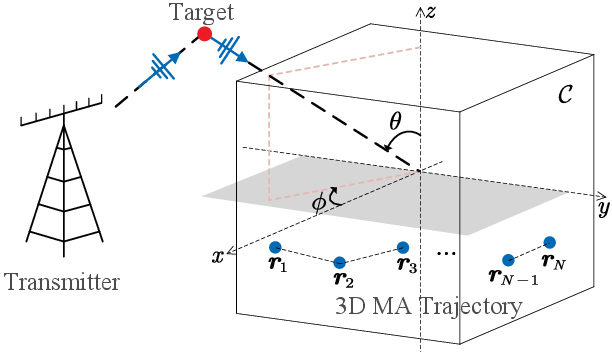}
		\caption{System model for the considered MA-enabled  sensing system.}
		\label{model_3-D}
	\end{figure}
	
	\begin{figure}[!t]
		\centering
		\includegraphics[width=65mm]{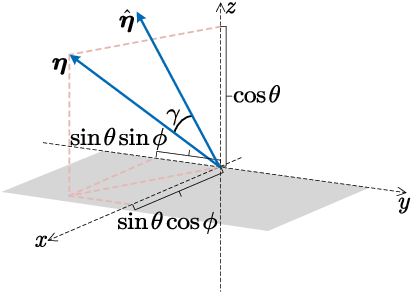}
		\caption{Illustration of the direction vector $\bm{\eta}$ and its angular error $\gamma$.}
		\label{3-D_coordinate}
	\end{figure}
	
	\section{MSAEB Characterization for 3-D MA Sensing}
	
	\subsection{System Model}
	As illustrated in Fig.~\ref{model_3-D}, we consider a bistatic sensing system in which the receiver employs a single MA moving within a 3-D region $\mathcal{C}$. For target direction estimation, the sensing transmitter repeatedly broadcasts probing signals, whereas the MA moves continuously and collects the target-reflected signals over $N$ discrete snapshots. Specifically, the MA receives one reflected signal at each snapshot, with the sampling period denoted by $T_s$. Thus, the total time for sensing and antenna movement is $T = N T_s$. We define the initial sensing time as $t_1 = 0$, at which the MA is located at position $\bm{r}_1 = [x_1, y_1, z_1]^{\mathsf T} \in \mathcal{C}$. The time of the $n$th snapshot ($n = 1, \ldots, N$) is denoted by $t_n = (n-1)T_s$. Let $\bm{r}_n = [x_n, y_n, z_n]^{\mathsf T} \in \mathcal{C}$ represent the MA's position at time $t_n$. We collect all MA's positions over the $N$ snapshots into the antenna position matrix (APM) $\bm{R} = [\bm{r}_1, \bm{r}_2, \ldots, \bm{r}_N] \in \mathbb{R}^{3\times N}$. Owing to the limited movement speed of practical MAs and the short duration of each probing signal, the MA's position is assumed to be quasi-static within each snapshot. The MA's velocity between positions $\bm{r}_n$ and $\bm{r}_{n+1}$ is denoted by $\bm{v}_n = [v_n^x, v_n^y, v_n^z]^{\mathsf T}$ ($n = 1, \ldots, N-1$), which is subject to a maximum speed constraint $v^{\rm m} > 0$, i.e., $\|\bm{v}_n\|_2 \le v^{\rm m}$. Accordingly, the MA's position at snapshot $n$ ($n = 2, \ldots, N$) can be represented as a function of the initial position $\bm{r}_1$ and the antenna velocity matrix (AVM) $\bm{V} = [\bm{v}_1, \dots, \bm{v}_{N-1}] \in \mathbb{R}^{3\times (N-1)}$:
	\begin{align}
		\bm{r}_n &= \bm{r}_{n-1} + \bm{v}_{n-1}(t_n - t_{n-1}) \notag\\
		&= \bm{r}_1 + T_s \sum_{m=1}^{n-1} \bm{v}_m.
	\end{align}
	We consider that the propagation between the target and the receiver is dominated by the line-of-sight (LoS) component, which remains constant over the $N$ sensing snapshots. Since the distance between the target and the receiver is significantly larger than the antenna movement region size, the far-field condition is adopted for the target–receiver channel. As shown in Fig.~\ref{3-D_coordinate}, the physical elevation and azimuth AoAs of the target are denoted by $\theta \in [0, \pi]$ and $\phi \in [0, 2\pi]$, respectively. Accordingly, the direction vector of the target is denoted by
	\begin{align}
		\bm{\eta}\triangleq[\sin\theta \cos\phi, \sin\theta \sin\phi, \cos\theta]^{\mathsf T}.
	\end{align}
	Then, the steering vector of the MA over $N$ snapshots can be expressed as a function of the APM $\bm{R}$ and the direction vector $\bm{\eta}$:
	\begin{equation}
		\bm{\alpha}(\bm{R},\bm{\eta}) \triangleq \left[ e^{j\frac{2\pi}{\lambda}\bm{\eta}^{\mathsf T}\bm{r}_1}, \ldots, e^{j\frac{2\pi}{\lambda}\bm{\eta}^{\mathsf T}\bm{r}_N} \right]^{\mathsf T} \in \mathbb{C}^{N\times1},
	\end{equation}
	where $\lambda$ denotes the carrier wavelength. Let $\beta$ represent the complex-valued channel gain of the LoS path measured at the origin of the antenna movement region $\mathcal{C}$. Thus, the LoS channel vector between the target and the receiver is given by
	\begin{equation}
		\bm{h}(\bm{R},\bm{\eta}) = \beta \bm{\alpha}(\bm{R},\bm{\eta}).	
	\end{equation}
	We further assume that the sensing transmitter continuously radiates a fixed omnidirectional probing waveform. Consequently, by stacking the received signals over all $N$ snapshots, the received signal vector can be written as
	\begin{equation}
		\bm{y} = \bm{h}(\bm{R},\bm{\eta})s + \bm{n},
	\end{equation}
	where $s$ denotes the target-reflected signal with average power $P = \mathbb{E}\{|s|^2\}$, and $\bm{n} \sim \mathcal{CN}(0, \sigma^2 \bm{I}_N)$ represents the additive white Gaussian noise (AWGN) with $\sigma^2$ being the average noise power.

	\subsection{Direction Vector Estimation}
	For a given MA trajectory $\bm{R}$, the direction vector is estimated using the maximum likelihood estimation (MLE) method. Specifically, the unknown parameters $\tilde{\beta} \triangleq \beta s$ and direction vector $\bm{\eta}$ are jointly estimated by solving the following least-squares problem,
	\begin{align}\label{hatbetaeta}
		\left(\hat{\beta},\hat{\bm{\eta}}\right) = \arg\min_{\bar{\beta},\bar{\bm{\eta}}} \|\bm{y} - \bar{\beta}\bm{\alpha}(\bm{R},\bar{\bm{\eta}})\|_2^2.
	\end{align}
	For any given candidate value of $\bm{\eta}$, the optimal estimation of $\tilde{\beta}$ is available in closed-form and given by
	\begin{align}\label{hatbeta}
		\hat{\beta} = \frac{\bm{\alpha}(\bm{R},\bm{\eta})^{\mathsf H} \bm{y}}{\|\bm{\alpha}(\bm{R},\bm{\eta})\|_2^2}.
	\end{align}
	By substituting \eqref{hatbeta} back into \eqref{hatbetaeta}, we obtain
	\begin{align}
		&\|\bm{y} - \hat{\beta}\bm{\alpha}(\bm{R},\bm{\eta})\|_2^2 \\
		=& \left(\bm{y} - \hat{\beta}\bm{\alpha}(\bm{R},\bm{\eta})\right)^{\mathsf H} \left(\bm{y} - \hat{\beta}\bm{\alpha}(\bm{R},\bm{\eta})\right) \notag\\
		=&\|\bm{y}\|_2^2 + |\hat{\beta}|^2 \|\bm{\alpha}(\bm{R},\bm{\eta})\|_2^2  - 2\Re\left(\hat{\beta}\bm{y}^{\mathsf H} \bm{\alpha}(\bm{R},\bm{\eta})\right) \notag\\
		=&\|\bm{y}\|_2^2 - \frac{\left|\bm{y}^{\mathsf H} \bm{\alpha}(\bm{R},\bm{\eta})\right|^2}{\|\bm{\alpha}(\bm{R},\bm{\eta})\|_2^2} \notag\\
		=&\|\bm{y}\|_2^2 - \frac{1}{N}\left|\bm{y}^{\mathsf H} \bm{\alpha}(\bm{R},\bm{\eta})\right|^2. \notag
	\end{align}
	Since $\|\bm{y}\|_2^2$ is independent of $\bm{\eta}$, minimizing $\|\bm{y} - \hat{\beta}\bm{\alpha}(\bm{R},\bm{\eta})\|_2^2$ is equivalent to maximizing $\left|\bm{y}^{\mathsf H} \bm{\alpha}(\bm{R},\bm{\eta})\right|^2$, i.e.,
	\begin{align}\label{MLE1-D}
		\hat{\bm{\eta}} &= \arg\min_{\substack{\bar{\theta}\in[0,\pi]\\ \bar{\phi}\in[0,2\pi]}} \|\bm{y}\|_2^2 - \frac{1}{N}\left|\bm{y}^{\mathsf H} \bm{\alpha}(\bm{R},\bar{\bm{\eta}})\right|^2 \\
		&= \arg\max_{\substack{\bar{\theta}\in[0,\pi]\\ \bar{\phi}\in[0,2\pi]}} \left|\bm{y}^{\mathsf H} \bm{\alpha}(\bm{R},\bar{\bm{\eta}})\right|^2, \notag
	\end{align}
	which can be efficiently obtained by exhaustively searching over the candidate direction vector $\bar{\bm{\eta}} = [\sin\bar{\theta} \cos\bar{\phi}, \sin\bar{\theta} \sin\bar{\phi}, \cos\bar{\theta}]^{\mathsf T}$,	where $\bar{\theta}\in[0,\pi]$ and $\bar{\phi}\in[0,2\pi]$. As shown in Fig.~\ref{3-D_coordinate}, since $\|\bm{\eta}\|_2 = \|\hat{\bm{\eta}}\|_2 = 1$, the angular error of the estimated direction vector $\hat{\bm{\eta}}$ is
	\begin{align}
		\gamma \triangleq \arccos\left(\frac{\bm{\eta}^{\mathsf T}\hat{\bm{\eta}}}{\|\bm{\eta}\|_2\|\hat{\bm{\eta}}\|_2}\right) =\arccos\left(\bm{\eta}^{\mathsf T}\hat{\bm{\eta}}\right).
	\end{align}
	To characterize the direction vector estimation error, we adopt the MSAE, defined as $\mathbb{E}\{\gamma^2\}$, as the performance metric. First, unlike the component-wise mean square errors (MSEs) of the physical elevation and azimuth AoAs $\theta$ and $\phi$, the angular error $\gamma$ avoids parameter coupling near the poles (i.e., $\theta \approx 0$ or $\theta \approx \pi$). Specifically, when $\theta=0$, the direction vector reduces to $\bm{\eta}\equiv[0,0,1]^{\mathsf T}$, which is independent of $\phi$, causing the CRB for estimating $\phi$ to diverge. Second, the angular error $\gamma$ eliminates AoA-wrapping ambiguities. Specifically, when $\phi\approx 0$, an estimate of $\phi$ close to $2\pi$ may lead to a large MSE in $\phi$ despite representing a similar direction vector. Third, the angular error $\gamma$ directly measures the geodesic distance on the unit sphere between $\bm{\eta}$ and its estimate $\hat{\bm{\eta}}$, and is therefore invariant to rotations of the 3-D coordinate system. As a result, the MSAE provides a more physically meaningful and robust metric for direction vector estimation. Then, defining the physical angle vector as $\bm{\chi} \triangleq [\theta,\phi]^{\mathsf T}$, the lower-bound of $\mathbb{E}\{\gamma^2\}$ is given by \cite{nehorai1999performance}
	\begin{align}\label{CRB1-D}
		\mathbb{E}\{\gamma^2\} \geq {\rm MSAEB}_{\bm{\chi}}(\bm{R}) = {\rm CRB}_\theta(\bm{R}) + (\sin\theta)^2 {\rm CRB}_\phi(\bm{R}),
	\end{align}
	where ${\rm CRB}_\theta(\bm{R})$ and ${\rm CRB}_\phi(\bm{R})$ are the CRBs for estimating $\theta$ and $\phi$, respectively. Let $\bm{x} = [x_1, \dots, x_N]^{\mathsf T}$, $\bm{y} = [y_1, \dots, y_N]^{\mathsf T}$, and $\bm{z} = [z_1, \dots, z_N]^{\mathsf T}$ denote the antenna position vectors (APVs) along the $x$-, $y$-, and $z$-axes, respectively. The variance of the APV $\bm{x}$ is defined as ${\rm var}(\bm{x}) \triangleq \frac{1}{N}\sum_{n=1}^{N}x_n^2 - \mu(\bm{x})^2$, where $\mu(\bm{x}) = \frac{1}{N}\sum_{n=1}^{N} x_n$ denotes the mean of $\bm{x}$.
	Similarly, the covariance between two APVs $\bm{x}$ and $\bm{y}$ is defined as ${\rm cov}(\bm{x},\bm{y}) \triangleq \frac{1}{N} \sum_{n=1}^N x_n y_n - \mu(\bm{x}) \mu(\bm{y})$. Based on these definitions, we define the APV covariance matrix as
	\begin{align}\label{U}
		\bm{U} \triangleq \begin{bmatrix}
			{\rm var}(\bm{x}) & {\rm cov}(\bm{x},\bm{y}) & {\rm cov}(\bm{x},\bm{z})\\
			{\rm cov}(\bm{x},\bm{y}) & {\rm var}(\bm{y}) & {\rm cov}(\bm{y},\bm{z})\\
			{\rm cov}(\bm{x},\bm{z})& {\rm cov}(\bm{y},\bm{z}) & {\rm var}(\bm{z})
		\end{bmatrix}.
	\end{align}
	Moreover, we define the following vectors,
	\begin{align}
		\bm{f} &\triangleq \frac{\partial \bm{\eta}}{\partial \theta} =  [\cos\theta\cos\phi, \cos\theta\sin\phi, -\sin\theta]^{\mathsf T}, \\
		\bm{g} &\triangleq [-\sin\phi, \cos\phi, 0]^{\mathsf T},\notag
	\end{align}
	such that $\bm{g}\sin\theta = \frac{\partial \bm{\eta}}{\partial \phi}$. Then, the CRBs for physical angle estimation are given by the following theorem.
	\begin{theorem}
		${\rm CRB}_\theta(\bm{R})$ and ${\rm CRB}_\phi(\bm{R})$ are given by
		\begin{align}\label{CRB2}
			&{\rm CRB}_\theta(\bm{R}) \\
			&~~~~= \frac{\sigma^2\lambda^2}{8\pi^2PN|\beta|^2} \frac{\bm{g}^{\mathsf T} \bm{U} \bm{g}}{(\bm{f}^{\mathsf T} \bm{U} \bm{f})(\bm{g}^{\mathsf T} \bm{U} \bm{g})-(\bm{f}^{\mathsf T} \bm{U} \bm{g})^2}, \notag\\
			&{\rm CRB}_\phi(\bm{R}) \notag\\
			&~~~~= \frac{\sigma^2\lambda^2}{8\pi^2PN|\beta|^2}\frac{1}{(\sin\theta)^2} \frac{\bm{f}^{\mathsf T} \bm{U} \bm{f}}{(\bm{f}^{\mathsf T} \bm{U} \bm{f})(\bm{g}^{\mathsf T} \bm{U} \bm{g})-(\bm{f}^{\mathsf T} \bm{U} \bm{g})^2}.\notag
		\end{align}
	\end{theorem}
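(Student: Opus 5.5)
We derive the Fisher information matrix (FIM) for the full unknown parameter vector and then eliminate the nuisance parameters through a Schur complement. The unknowns are $\bm{\xi} \triangleq [\theta,\phi,\Re(\tilde\beta),\Im(\tilde\beta)]^{\mathsf T}$, with $\tilde\beta=\beta s$ treated as deterministic and $|\tilde\beta|^2 = P|\beta|^2$. The observation model is $\bm{y}=\tilde\beta\bm{\alpha}(\bm{R},\bm{\eta})+\bm{n}$ with $\bm{n}\sim\mathcal{CN}(0,\sigma^2\bm{I}_N)$, so the Slepian--Bangs formula gives
\[
\bm{J}[i,j]=\frac{2}{\sigma^2}\Re\left\{\frac{\partial \bm{m}^{\mathsf H}}{\partial \xi_i}\frac{\partial \bm{m}}{\partial \xi_j}\right\},\qquad \bm{m}=\tilde\beta\bm{\alpha}.
\]

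\textbf{Derivatives of the mean.} The $n$th entry of $\bm{\alpha}$ has derivative $\partial/\partial\theta = j\frac{2\pi}{\lambda}\bm{f}^{\mathsf T}\bm{r}_n\,\bm{\alpha}[n]$. Likewise, $\partial/\partial\phi = j\frac{2\pi}{\lambda}\sin\theta\,\bm{g}^{\mathsf T}\bm{r}_n\,\bm{\alpha}[n]$, using $\partial\bm{\eta}/\partial\phi=\bm{g}\sin\theta$. The derivatives with respect to $\Re(\tilde\beta)$ and $\Im(\tilde\beta)$ are $\bm{\alpha}$ and $j\bm{\alpha}$.

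\textbf{FIM entries.} Since $|\bm{\alpha}[n]|=1$, each entry reduces to a sum over the snapshots:
\begin{align*}
\bm{J}_{\theta\theta} &= c\,\textstyle\sum_n(\bm{f}^{\mathsf T}\bm{r}_n)^2, & \bm{J}_{\theta\phi} &= c\sin\theta\textstyle\sum_n(\bm{f}^{\mathsf T}\bm{r}_n)(\bm{g}^{\mathsf T}\bm{r}_n),
\end{align*}
with $c=\frac{8\pi^2P|\beta|^2}{\sigma^2\lambda^2}$, and $\bm{J}_{\phi\phi}$ follows analogously. The cross terms with the gain parameters are $\frac{2}{\sigma^2}\Re\{-j\tilde\beta^*(\cdot)\sum_n\bm{f}^{\mathsf T}\bm{r}_n\}$ and the corresponding imaginary-part versions. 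The gain block equals $\frac{2N}{\sigma^2}\bm{I}_2$.

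\textbf{Schur complement.} Next I take the Schur complement to eliminate $\tilde\beta$. The cross block is rank one in the direction of $j\tilde\beta$, so the correction subtracts $c\frac{1}{N}(\sum_n\bm{f}^{\mathsf T}\bm{r}_n)^2$ from the $\theta\theta$ entry, and similarly for the other angle entries. This is the one step needing care: the real and imaginary parts of $\tilde\beta$ must combine correctly so that the phase of $\tilde\beta$ cancels. I would verify it by writing the cross terms as $\frac{2}{\sigma^2}\frac{2\pi}{\lambda}S_f\,[\Im\tilde\beta,\,-\Re\tilde\beta]$ with $S_f=\sum_n\bm{f}^{\mathsf T}\bm{r}_n$. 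The correction then equals $\frac{2}{\sigma^2}\frac{4\pi^2}{\lambda^2}\frac{|\tilde\beta|^2}{N}S_fS_g$.

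\textbf{Covariance form.} After this correction each entry becomes $N$ times an empirical covariance. For example, $\frac1N\sum(\bm{f}^{\mathsf T}\bm{r}_n)^2-(\frac1N\sum\bm{f}^{\mathsf T}\bm{r}_n)^2=\bm{f}^{\mathsf T}\bm{U}\bm{f}$, by bilinearity and the definition \eqref{U}. The effective $2\times2$ FIM for $(\theta,\phi)$ is therefore
\[
\bm{J}_{\rm e}= cN\begin{bmatrix}\bm{f}^{\mathsf T}\bm{U}\bm{f} & \sin\theta\,\bm{f}^{\mathsf T}\bm{U}\bm{g}\\ \sin\theta\,\bm{f}^{\mathsf T}\bm{U}\bm{g} & \sin^2\theta\,\bm{g}^{\mathsf T}\bm{U}\bm{g}\end{bmatrix}.
\]

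\textbf{Inversion.} Inverting this $2\times2$ matrix explicitly, its determinant is $c^2N^2\sin^2\theta\,[(\bm{f}^{\mathsf T}\bm{U}\bm{f})(\bm{g}^{\mathsf T}\bm{U}\bm{g})-(\bm{f}^{\mathsf T}\bm{U}\bm{g})^2]$. The diagonal entries of the inverse then yield exactly the two expressions in \eqref{CRB2}, with $1/(cN)=\frac{\sigma^2\lambda^2}{8\pi^2PN|\beta|^2}$. Finally, I would note that the formulas presume the determinant is nonzero and $\sin\theta\neq0$; otherwise the CRB diverges.
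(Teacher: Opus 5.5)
Your proposal is correct and follows the paper's Appendix A proof step for step. Both use the Slepian--Bangs FIM over $[\theta,\phi,\Re(\tilde\beta),\Im(\tilde\beta)]$ with gain block $\frac{2N}{\sigma^2}\bm{I}_2$ and a rank-one cross block, whose Schur-complement correction turns second moments into the covariances $\bm{f}^{\mathsf T}\bm{U}\bm{f}$, $\sin\theta\,\bm{f}^{\mathsf T}\bm{U}\bm{g}$, $\sin^2\theta\,\bm{g}^{\mathsf T}\bm{U}\bm{g}$, followed by a final $2\times 2$ inversion. The only slip is an overall sign in your cross-term vector: it should be $[-\Im\tilde\beta,\,\Re\tilde\beta]$, as your own $\Re\{-j\tilde\beta^{*}\cdots\}$ gives and matching the paper's $\bar{\bm{\beta}}$; this is harmless because only its outer product enters the Schur complement.
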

	\begin{proof}
		See Appendix A.
	\end{proof}
	Then, substituting \eqref{CRB2} back into \eqref{CRB1-D}, ${\rm MSAEB}_{\bm{\chi}}(\bm{R})$ can be further simplified as	
	\begin{align}\label{MSAEB}
		{\rm MSAEB}_{\bm{\chi}}(\bm{R}) &= \frac{\sigma^2\lambda^2}{8\pi^2PN|\beta|^2} \frac{\bm{g}^{\mathsf T} \bm{U} \bm{g} + \bm{f}^{\mathsf T} \bm{U} \bm{f}}{(\bm{f}^{\mathsf T} \bm{U} \bm{f})(\bm{g}^{\mathsf T} \bm{U} \bm{g})-(\bm{f}^{\mathsf T} \bm{U} \bm{g})^2}, \notag\\
		&\triangleq \rho F_{\bm{\chi}}(\bm{R}),
	\end{align}	
	where $\rho \triangleq \frac{\sigma^2\lambda^2}{8\pi^2PN|\beta|^2}$ and $F_{\bm{\chi}}(\bm{R}) \triangleq \frac{\bm{g}^{\mathsf T} \bm{U} \bm{g} + \bm{f}^{\mathsf T} \bm{U} \bm{f}}{(\bm{f}^{\mathsf T} \bm{U} \bm{f})(\bm{g}^{\mathsf T} \bm{U} \bm{g})-(\bm{f}^{\mathsf T} \bm{U} \bm{g})^2}$. Equation \eqref{MSAEB} indicates that ${\rm MSAEB}_{\bm{\chi}}(\bm{R})$ explicitly depends on the MA trajectory $\bm{R}$. Therefore, the MA trajectory can be optimized to minimize ${\rm MSAEB}_{\bm{\chi}}(\bm{R})$. Intuitively, this can be achieved by enlarging the spatial spreads of $\bm{x}$, $\bm{y}$, and $\bm{z}$ while maintaining symmetry of the trajectory with respect to (w.r.t.) the $x$-, $y$-, and $z$-axes, which helps increase $\bm{g}^{\mathsf T} \bm{U} \bm{g}$ and $\bm{f}^{\mathsf T} \bm{U} \bm{f}$ while reducing $\bm{f}^{\mathsf T}\bm{U}\bm{g}$. Nevertheless, a trade-off generally exists in minimizing ${\rm MSAEB}_{\bm{\chi}}(\bm{R})$ over different directions due to the coupling between $\bm{U}$ and $\{\bm{f},\bm{g}\}$, which depend on the MA trajectory $\bm{R}$ and the direction vector $\bm{\eta}$, respectively.
	
	Let $\mathbb{D}$ denote the angular region of interest for sensing the physical angle vector $\bm{\chi}$. To ensure robust sensing performance over directions in $\mathbb{D}$, we aim to minimize the maximum (worst-case) value of ${\rm MSAEB}_{\bm{\chi}}(\bm{R})$ over $\mathbb{D}$ by jointly optimizing the APM $\bm{R}$ and the AVM $\bm{V}$, which can be formulated as
	\begin{align}\label{MSAEBContinue}
		\min_{\bm{R},\bm{V}}\max_{\bm{\chi}\in\mathbb{D}} ~ {\rm MSAEB}_{\bm{\chi}}(\bm{R}).
	\end{align}
	Since obtaining the maximum value of ${\rm MSAEB}_{\bm{\chi}}(\bm{R})$ over $\mathbb{D}$ in closed-form is intractable, we uniformly discretize $\mathbb{D}$ into $Q$ grid points, denoted by $\bar{\mathbb{D}}\triangleq\{\bm{\chi}_q=[\theta_q,\phi_q]^{\mathsf T}\}_{q=1}^Q$. Accordingly, the objective in \eqref{MSAEBContinue} can be approximated as
	\begin{align}\label{MSAEBDiscrete}
		\min_{\bm{R},\bm{V}}\max_{1\leq q\leq Q} ~ {\rm MSAEB}_{\bm{\chi}_q}(\bm{R}).
	\end{align}
	Moreover, according to \eqref{MSAEB}, minimizing ${\rm MSAEB}_{\bm{\chi}}(\bm{R})$ w.r.t. $\bm{R}$ and $\bm{V}$ can be equivalently expressed as
	\begin{align}
		\min_{\bm{R},\bm{V}}  ~ {\rm MSAEB}_{\bm{\chi}}(\bm{R}) \iff 
		 \min_{\bm{R},\bm{V}}  ~ F_{\bm{\chi}}(\bm{R}).
	\end{align}
	As a result, the MA trajectory optimization problem can be formulated as
	\begin{subequations}
		\begin{align}
			\textrm {(P1)}~~\min_{\bm{R},\bm{V},\delta} \quad & \delta \label{P1a}\\
			\text{s.t.} \quad & F_{\bm{\chi}_q}(\bm{R}) \leq \delta, \quad q=1, \dots, Q, \label{P1b}\\
			& \|\bm{v}_n\|_2 \le v^{\rm m}, \quad n=1, \dots, N-1, \label{P1c} \\
			& \bm{r}_n \in \mathcal{C}, \quad n=1, \dots, N, \label{P1-D} \\
			& \bm{r}_n = \bm{r}_1 + T_s \sum_{m=1}^{n-1} \bm{v}_m, \quad n=2, \dots, N, \label{P1e}
		\end{align}
	\end{subequations}
	where $F_{\bm{\chi}_q}(\bm{R})\triangleq\frac{\bm{g}_q^{\mathsf T} \bm{U} \bm{g}_q + \bm{f}_q^{\mathsf T} \bm{U} \bm{f}_q}{(\bm{f}_q^{\mathsf T} \bm{U} \bm{f}_q)(\bm{g}_q^{\mathsf T} \bm{U} \bm{g}_q)-(\bm{f}_q^{\mathsf T} \bm{U} \bm{g}_q)^2}$, $\bm{f}_q \triangleq [\cos\theta_q\cos\phi_q, \cos\theta_q\sin\phi_q, -\sin\theta_q]^{\mathsf T}$, and $\bm{g}_q \triangleq [-\sin\phi_q, \cos\phi_q, 0]^{\mathsf T}$. Since the fractional constraint in \eqref{P1b} is non-convex w.r.t. $\bm{R}$ and $\bm{V}$, problem (P1) is inherently non-convex and challenging to solve optimally. In addition, the intrinsic coupling among $\bm{x}$, $\bm{y}$, and $\bm{z}$ further increases the complexity of solving this optimization problem.

	\section{Performance Analysis}
	Prior to solving problem (P1), in this section, we analyze the performance of the proposed MA-enabled sensing system to provide design insights for three specific cases: 1) $\mathbb{D}$ contains a single direction only. 2) 3-D isotropic sensing, i.e., ${\rm MSAEB}_{\bm{\chi}}(\bm{R})$ is constant for all $\bm{\chi}\in[0,\pi]\times[0,2\pi]$. 3) The MA trajectory lies entirely in the $x$-$y$ plane.
	
	\subsection{$\mathbb{D}$ Contains A Single Direction Only}
	
	\begin{figure}[!t]
		\centering
		\includegraphics[width=60mm]{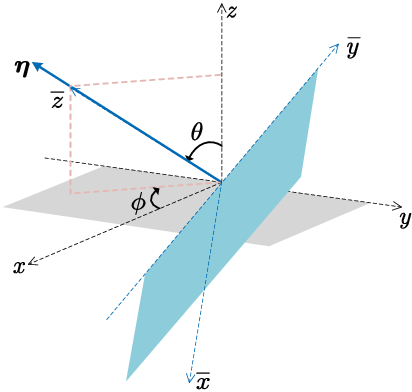}
		\caption{Illustration of rotated Cartesian coordinate system $\bar{x}$-$\bar{y}$-$\bar{z}$, where $\bar{z}$-axis is aligned with $\bm{\eta}$.}
		\label{new_coordinate}
	\end{figure}
	
	We first consider a special case where the angular region of interest $\mathbb{D}$ is only a single direction $\bm{\chi}$. In this case, the min-max objective function in \eqref{MSAEBContinue} reduces to minimizing ${\rm MSAEB}_{\bm{\chi}}(\bm{R})$ for one direction $\bm{\chi}$ only, i.e.,
	\begin{align}\label{MSAEBContinue1}
		\min_{\bm{R},\bm{V}} ~ {\rm MSAEB}_{\bm{\chi}}(\bm{R}).
	\end{align}
	To gain more insights, we exploit the rotational invariance of the MSAE metric and introduce a rotated Cartesian coordinate system, denoted by $\bar{x}$-$\bar{y}$-$\bar{z}$, such that the $\bar{z}$-axis is aligned with the direction vector $\bm{\eta}$ as shown in Fig.~\ref{new_coordinate}. In the new coordinate system, the elevation AoA is $\tilde{\theta}=0$, thus the target direction is
	\begin{align}
		\tilde{\bm{\eta}} &\triangleq[\sin\tilde{\theta} \cos\tilde{\phi}, \sin\tilde{\theta} \sin\tilde{\phi}, \cos\tilde{\theta}]^{\mathsf T} \\
		&= [0, 0, 1]^{\mathsf T}.\notag
	\end{align}
	Since ${\rm MSAEB}_{\bm{\chi}}(\bm{R})$ measures the geodesic distance on the unit sphere, it is invariant to rotations of the coordinate system, and therefore this transformation does not alter the value of the lower-bound of MSAE. Let $\bar{\bm{x}}=[\bar{x}_1,\ldots,\bar{x}_N]^{\mathsf T}$, $\bar{\bm{y}}=[\bar{y}_1,\ldots,\bar{y}_N]^{\mathsf T}$, and $\bar{\bm{z}}=[\bar{z}_1,\ldots,\bar{z}_N]^{\mathsf T}$ denote the APVs along the $\bar{x}$-, $\bar{y}$-, and $\bar{z}$-axes, respectively. Thus, the MA's position in the rotated coordinate system at time $t_n$ is
	\begin{align}
		\bar{\bm{r}}_n \triangleq [\bar{x}_n, \bar{y}_n, \bar{z}_n]^{\mathsf T} = \bm{Q}\bm{r}_n,
	\end{align}
	where $\bm{Q}\in\mathbb{R}^{3\times3}$ is an orthonormal rotation matrix given by
	\begin{align}
		\bm{Q} =
		\begin{bmatrix}
			\cos\theta\cos\phi & \cos\theta\sin\phi & -\sin\theta \\
			-\sin\phi & \cos\phi & 0 \\
			\sin\theta\cos\phi & \sin\theta\sin\phi & \cos\theta
		\end{bmatrix},
	\end{align}
	satisfying $\bm{Q}\bm{\eta} = \tilde{\bm{\eta}}$. Accordingly, the APV covariance matrix $\bm{U}$ in \eqref{U} can be expressed in the $\bar{x}$-$\bar{y}$-$\bar{z}$ coordinate system as
	\begin{align}\label{U2}
		\bar{\bm{U}} =
		\begin{bmatrix}
			{\rm var}(\bar{\bm{x}}) & {\rm cov}(\bar{\bm{x}},\bar{\bm{y}}) & {\rm cov}(\bar{\bm{x}},\bar{\bm{z}}) \\
			{\rm cov}(\bar{\bm{x}},\bar{\bm{y}}) & {\rm var}(\bar{\bm{y}}) & {\rm cov}(\bar{\bm{y}},\bar{\bm{z}}) \\
			{\rm cov}(\bar{\bm{x}},\bar{\bm{z}}) & {\rm cov}(\bar{\bm{y}},\bar{\bm{z}}) & {\rm var}(\bar{\bm{z}})
		\end{bmatrix}.
	\end{align}
	Since $\tilde{\theta}=0$, $\bm{f}$ and $\bm{g}$ in the $\bar{x}$-$\bar{y}$-$\bar{z}$ coordinate system are given by
	\begin{align}\label{fg2}
		\bar{\bm{f}} &= [\cos\tilde{\theta}\cos\tilde{\phi}, \cos\tilde{\theta}\sin\tilde{\phi}, -\sin\tilde{\theta}]^{\mathsf T}\\
		&=[\cos\tilde{\phi}, \sin\tilde{\phi}, 0]^{\mathsf T}, \notag\\
		\bar{\bm{g}} &= [-\sin\tilde{\phi}, \cos\tilde{\phi}, 0]^{\mathsf T}. \notag
	\end{align}
	Substituting \eqref{U2} and \eqref{fg2} into \eqref{MSAEB}, we obtain
	\begin{align}\label{fUf}
		\bar{\bm{f}}^{\mathsf T}\bar{\bm{U}}\bar{\bm{f}} &= (\cos\tilde{\phi})^2{\rm var}(\bar{\bm{x}}) + (\sin\tilde{\phi})^2{\rm var}(\bar{\bm{y}}) \\
		&~~~~ + 2\sin\tilde{\phi}\cos\tilde{\phi}{\rm cov}(\bar{\bm{x}},\bar{\bm{y}}), \notag\\
		\bar{\bm{g}}^{\mathsf T}\bar{\bm{U}}\bar{\bm{g}} &= (\sin\tilde{\phi})^2{\rm var}(\bar{\bm{x}}) + (\cos\tilde{\phi})^2{\rm var}(\bar{\bm{y}}) \notag\\
		&~~~~ - 2\sin\tilde{\phi}\cos\tilde{\phi}{\rm cov}(\bar{\bm{x}},\bar{\bm{y}}), \notag\\
		\bar{\bm{f}}^{\mathsf T}\bar{\bm{U}}\bar{\bm{g}} &= \sin\tilde{\phi}\cos\tilde{\phi}({\rm var}(\bar{\bm{y}}) - {\rm var}(\bar{\bm{x}})) \notag\\
		&~~~~ + ((\cos\tilde{\phi})^2 - (\sin\tilde{\phi})^2){\rm cov}(\bar{\bm{x}},\bar{\bm{y}}). \notag
	\end{align}
	It is worth noting that all terms associated with $\bar{\bm{z}}$ vanish in \eqref{fUf}, indicating that antenna movement along the target direction does not contribute to improving the direction estimation accuracy. Thus, $F_{\bm{\chi}}(\bm{R})$ for a single direction can be simplified as
	\begin{align}\label{F2D}
		F_{\bm{\chi}}(\bm{R}) &= \frac{\bar{\bm{g}}^{\mathsf T} \bar{\bm{U}} \bar{\bm{g}} + \bar{\bm{f}}^{\mathsf T} \bar{\bm{U}} \bar{\bm{f}}}{(\bar{\bm{f}}^{\mathsf T} \bar{\bm{U}} \bar{\bm{f}})(\bar{\bm{g}}^{\mathsf T} \bar{\bm{U}} \bar{\bm{g}})-(\bar{\bm{f}}^{\mathsf T} \bar{\bm{U}} \bar{\bm{g}})^2} \\
		&= \frac{{\rm var}(\bar{\bm{x}})+{\rm var}(\bar{\bm{y}})}
		{{\rm var}(\bar{\bm{x}}){\rm var}(\bar{\bm{y}})
			-{\rm cov}(\bar{\bm{x}},\bar{\bm{y}})^2}. \notag
	\end{align}		
	Equation \eqref{F2D} indicates that, to reduce the lower-bound of MSAE, the spatial spreads of the MA trajectory in the $\bar{x}$-$\bar{y}$ plane should be enlarged to increase ${\rm var}(\bar{\bm{x}})$ and ${\rm var}(\bar{\bm{y}})$. Meanwhile, the MA trajectory should be symmetric w.r.t. the $\bar{x}$- and $\bar{y}$-axes so as to reduce ${\rm cov}(\bar{\bm{x}}, \bar{\bm{y}})^2$. In contrast, since $\bar{\bm{z}}$ does not appear in $F_{\bm{\chi}}(\bm{R})$, any antenna movement along the $\bar{z}$-axis is ineffective. Therefore, when $\mathbb{D}$ contains only a single direction, the optimal MA trajectory should lie entirely in the plane orthogonal to the target direction, i.e.,
	\begin{align}
		\bar{z}_n = \text{constant}, \quad n=1,\ldots,N.
	\end{align}

	\subsection{3-D Isotropic Sensing}
	
	In this subsection, we investigate the case of 3-D isotropic sensing, i.e., ${\rm MSAEB}_{\bm{\chi}}(\bm{R})$ remains invariant w.r.t. the target direction $\bm{\chi}$ over the entire angular region $[0,\pi]\times[0,2\pi]$. Such isotropy is highly desirable in practice, since it guarantees uniform sensing performance for the target direction.
	
	\begin{theorem}\label{theorem2}
		${\rm MSAEB}_{\bm{\chi}}(\bm{R})$ is constant for all
		$\bm{\chi}\in[0,\pi]\times[0,2\pi]$ if and only if
		\begin{align}
			\bm{U} = \tau\bm{I}_3,
		\end{align}
		where $\tau = {\rm var}(\bm{x}) = {\rm var}(\bm{y}) = {\rm var}(\bm{z})$.
	\end{theorem}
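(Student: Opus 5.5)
Our plan is to rewrite $F_{\bm{\chi}}(\bm{R})$ in a coordinate-free form. For each $\bm{\chi}$, the vectors $\bm{f}$, $\bm{g}$, $\bm{\eta}$ form an orthonormal basis; they are exactly the rows of the rotation matrix $\bm{Q}$. Let $\bm{P}=[\bm{f},\bm{g}]\in\mathbb{R}^{3\times 2}$ and $\bm{M}=\bm{P}^{\mathsf T}\bm{U}\bm{P}$, the $2\times2$ compression of $\bm{U}$ onto the plane orthogonal to $\bm{\eta}$. Then the numerator of $F_{\bm{\chi}}$ is ${\rm Tr}(\bm{M})$ and the denominator is $\det(\bm{M})$. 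Hence
\begin{align}
F_{\bm{\chi}}(\bm{R})={\rm Tr}(\bm{M}^{-1})=\frac{1}{\mu_1}+\frac{1}{\mu_2},
\end{align}
where $\mu_1,\mu_2$ are the eigenvalues of $\bm{M}$ (assumed positive so that the bound is finite). Using $\bm{P}\bm{P}^{\mathsf T}=\bm{I}_3-\bm{\eta}\bm{\eta}^{\mathsf T}$, we also have ${\rm Tr}(\bm{M})={\rm Tr}(\bm{U})-\bm{\eta}^{\mathsf T}\bm{U}\bm{\eta}$. Similarly, $\det(\bm{M})=\bm{\eta}^{\mathsf T}{\rm adj}(\bm{U})\bm{\eta}$, since the $(3,3)$ cofactor of $\bm{Q}\bm{U}\bm{Q}^{\mathsf T}$ equals $\bm{\eta}^{\mathsf T}{\rm adj}(\bm{U})\bm{\eta}$ by the rotation covariance of the adjugate. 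Therefore
\begin{align}
F_{\bm{\chi}}(\bm{R})=\frac{{\rm Tr}(\bm{U})-\bm{\eta}^{\mathsf T}\bm{U}\bm{\eta}}{\bm{\eta}^{\mathsf T}{\rm adj}(\bm{U})\bm{\eta}},
\end{align}
a ratio of two quadratic forms in the unit vector $\bm{\eta}$. This is the key reduction.

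\emph{Sufficiency.} If $\bm{U}=\tau\bm{I}_3$, then ${\rm adj}(\bm{U})=\tau^2\bm{I}_3$. Since $\|\bm{\eta}\|_2=1$, we get $F_{\bm{\chi}}=2\tau/\tau^2=2/\tau$ for every $\bm{\chi}$, so the MSAEB equals $2\rho/\tau$ everywhere.

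\emph{Necessity.} Diagonalize $\bm{U}=\bm{W}{\rm diag}(u_1,u_2,u_3)\bm{W}^{\mathsf T}$ with $u_i>0$. Because the MSAEB is rotation invariant, constancy over all directions is preserved if we work in the eigenbasis, so we may evaluate at $\bm{\eta}$ equal to each eigenvector $\bm{w}_k$. There $\bm{M}$ restricted to the orthogonal plane has eigenvalues $\{u_i\}_{i\neq k}$, so
\begin{align}
F=\frac{1}{u_i}+\frac{1}{u_j},\quad \{i,j,k\}=\{1,2,3\}.
\end{align}
Set $a_i=1/u_i$. Equality of these three values across $k$ gives $a_1+a_2=a_1+a_3=a_2+a_3$, hence $a_1=a_2=a_3$. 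So all eigenvalues coincide and $\bm{U}=\tau\bm{I}_3$. Reading off the diagonal, $\tau={\rm var}(\bm{x})={\rm var}(\bm{y})={\rm var}(\bm{z})$.

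The main obstacle is degeneracy. If $\bm{U}$ is singular, for example with $u_3=0$ (a planar trajectory), then the MSAEB diverges at $\bm{\eta}=\bm{w}_1$ or $\bm{w}_2$ but stays finite at $\bm{\eta}=\bm{w}_3$ when $u_1,u_2>0$, so it is not constant. If two eigenvalues vanish, the bound diverges at every $\bm{\eta}$ except $\pm\bm{w}_i$ for the remaining positive eigenvalue $u_i$, where it is $1/u_i$ plus the reciprocal of a zero eigenvalue, so it is infinite as well. That case must be either excluded as a trivially infinite bound or stated as a convention. We will state the theorem for finite MSAEB, which forces $\bm{U}\succ0$. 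The remaining step to verify is that $\bm{M}$ at $\bm{\eta}=\bm{w}_k$ is indeed ${\rm diag}(u_i,u_j)$ up to rotation. This holds because $\{\bm{f},\bm{g}\}$ is an orthonormal basis of $\bm{w}_k^{\perp}$ and the trace and determinant of $\bm{M}$ are basis independent.
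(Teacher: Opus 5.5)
Your proof is correct and follows the paper's route: the necessity direction is exactly the paper's argument (evaluate at the three eigenvectors of $\bm{U}$ to get $1/\lambda_i+1/\lambda_j$ and equate), with basis-independence of ${\rm Tr}(\bm{M})$ and $\det(\bm{M})$ (and your identity $F_{\bm{\chi}}(\bm{R})=({\rm Tr}(\bm{U})-\bm{\eta}^{\mathsf T}\bm{U}\bm{\eta})/(\bm{\eta}^{\mathsf T}{\rm adj}(\bm{U})\bm{\eta})$) standing in for the paper's explicit in-plane rotation by $\psi$. Your finiteness caveat goes beyond the paper, which silently divides by the eigenvalues, whereas a rank-one $\bm{U}$ makes the bound non-finite in every direction. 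One slip in that side remark: at $\pm\bm{w}_i$ the compression $\bm{M}$ is the zero matrix rather than having eigenvalues $u_i$ and $0$. The conclusion is unchanged.
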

	\begin{proof}
		See Appendix~B.
	\end{proof}
	Theorem~\ref{theorem2} indicates that to achieve isotropic wireless sensing performance, the variances of the MA trajectory along the three Cartesian axes are identical and the cross-axis	correlations of the MA trajectory vanish. Under this condition, the trajectory does not favor any particular spatial direction, leading to direction-independent sensing	accuracy. Moreover, when $\bm{U}=\tau\bm{I}_3$, the lower-bound of MSAE can be simplified as
	\begin{align}
		{\rm MSAEB}_{\bm{\chi}}(\bm{R})
		= \frac{2\rho}{\tau}. \notag
	\end{align}
	
	\begin{figure}[!t]
		\centering
		\includegraphics[width=85mm]{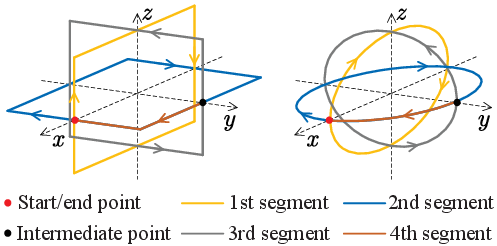}
		\caption{Illustration of the isotropic MA trajectory formed by three mutually orthogonal squares (left) and circles (right). The movement sequence follows the order of orange, blue, gray, and brown segments.}
		\label{isotropic_trajectory}
	\end{figure}
		
	A representative example of such an isotropic MA trajectory is formed by three mutually orthogonal regular $4k$-gons, with $k\in\mathbb{Z}^+$, lying in the $x$-$y$, $y$-$z$, and $x$-$z$ planes, respectively. As $k\to\infty$, these polygons converge to three mutually orthogonal circular trajectories. As illustrated in Fig.~\ref{isotropic_trajectory}, the MA starts from the red point on the $x$-axis, completes one loop along the orange square/circular segment in the $x$-$z$ plane, follows the blue square/circular segment in the $x$-$y$ plane for $3/4$ of a loop to reach the black point on the $y$-axis, completes one loop along the gray square/circular segment in the $y$-$z$ plane, and finally follows the brown square/circular segment in the $x$-$y$ plane for $1/4$ of a loop to return to the starting red point.
	
	\subsection{MA Trajectory in $x$-$y$ Plane}
	In this subsection, we consider the case where the MA trajectory lies entirely in the $x$-$y$ plane, which implies $z_n = 0$ for all $n=1,\ldots,N$. Consequently, the APV along the $z$-axis is a zero vector, leading to ${\rm var}(\bm{z}) = 0$ and ${\rm cov}(\bm{x},\bm{z}) = {\rm cov}(\bm{y},\bm{z}) = 0$. In this case, the sensing performance exhibits a strong dependence on the target's elevation angle $\theta$, which is shown in the following theorem.
	
	\begin{theorem}\label{theorem_xy_plane}
		For any MA trajectory in the $x$-$y$ plane, and for a given azimuth angle $\phi \in [0, 2\pi]$, the function $F_{\bm{\chi}}(\bm{R})$ monotonically increases w.r.t. the elevation angle $\theta \in [0, \pi/2)$, and monotonically decreases w.r.t. $\theta \in (\pi/2, \pi]$.
	\end{theorem}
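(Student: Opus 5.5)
Since the trajectory lies in the $x$-$y$ plane, $\bm{U}$ has zero third row and column. Let $\bm{U}_2\in\mathbb{R}^{2\times2}$ be its upper-left block, $\bm{e}\triangleq[\cos\phi,\sin\phi]^{\mathsf T}$ and $\bm{e}^{\perp}\triangleq[-\sin\phi,\cos\phi]^{\mathsf T}$. The third entry of $\bm{f}$ is multiplied by zeros, so only $\cos\theta\,\bm{e}$ survives from $\bm{f}$, while $\bm{g}$ restricts to $\bm{e}^{\perp}$. We therefore reduce $F_{\bm{\chi}}(\bm{R})$ to three scalars depending only on $\phi$ and the trajectory:
\begin{align}
a \triangleq \bm{e}^{\mathsf T}\bm{U}_2\bm{e},\quad b \triangleq (\bm{e}^{\perp})^{\mathsf T}\bm{U}_2\bm{e}^{\perp},\quad c \triangleq \bm{e}^{\mathsf T}\bm{U}_2\bm{e}^{\perp}. \notag
\end{align}
This gives $\bm{f}^{\mathsf T}\bm{U}\bm{f}=a\cos^2\theta$, $\bm{g}^{\mathsf T}\bm{U}\bm{g}=b$ and $\bm{f}^{\mathsf T}\bm{U}\bm{g}=c\cos\theta$. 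Substituting into the definition of $F_{\bm{\chi}}(\bm{R})$ in \eqref{MSAEB}, with $D\triangleq ab-c^2=\det(\bm{U}_2)$ (invariant under the rotation $[\bm{e},\bm{e}^{\perp}]$), we get
\begin{align}
F_{\bm{\chi}}(\bm{R}) = \frac{b + a\cos^2\theta}{D\cos^2\theta} = \frac{1}{D}\left(\frac{b}{\cos^2\theta} + a\right). \notag
\end{align}

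With this form, the monotonicity is immediate. We assume $D>0$ (the trajectory is not collinear), which is needed for the CRB in Theorem~1 to be finite. Then $b\ge 0$ because $\bm{U}_2\succeq 0$, and in fact $b>0$ because $\bm{U}_2\succ 0$. Hence $F_{\bm{\chi}}(\bm{R})$ is a positive affine, strictly increasing function of $1/\cos^2\theta$. On $[0,\pi/2)$, $\cos^2\theta$ strictly decreases from $1$ toward $0$, so $F$ strictly increases. On $(\pi/2,\pi]$, $\cos^2\theta$ strictly increases back to $1$, so $F$ strictly decreases. This also shows the divergence $F\to\infty$ as $\theta\to\pi/2$, i.e., at the endfire of the plane, since $b>0$. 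Alternatively, one can differentiate and note $\partial F/\partial\theta = 2b\sin\theta/(D\cos^3\theta)$, whose sign is that of $\cos\theta$ for $\theta\in(0,\pi)$.

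The main obstacle is bookkeeping rather than analysis: we must check that the cross term really factors as $c\cos\theta$, so that the denominator collapses to $D\cos^2\theta$ with no residual $\theta$-dependence. We must also handle the degenerate cases. If $D=0$, $F$ is infinite or undefined, and the statement should be read for $\bm{U}_2\succ0$, i.e., for trajectories that do not lie on a line. At $\theta=0$ the formula is smooth, and the claimed monotonicity on the half-open intervals holds as stated.
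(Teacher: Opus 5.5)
Your proof is correct and follows essentially the same route as the paper's Appendix~C. Both restrict $\bm{U}$ to its upper-left block, pull $\cos\theta$ out of the $\bm{f}$-terms, and write $F_{\bm{\chi}}(\bm{R}) = A(\phi)/(\cos\theta)^2 + B(\phi)$, with your $A = b/D$ and $B = a/D$. Your explicit nondegeneracy assumption $D=\det(\bm{U}_2)>0$ (trajectory not collinear) also usefully justifies the paper's unstated claim that $A(\phi)$ and $B(\phi)$ are strictly positive.
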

	\begin{proof}
		See Appendix~C.
	\end{proof}
	
	Theorem~\ref{theorem_xy_plane} reveals that a 2-D planar MA trajectory yields the highest direction estimation accuracy when the target is located near the poles (i.e., $\theta \approx 0$ or $\theta \approx \pi$). However, as the target approaches the endfire direction of the $x$-$y$ plane (i.e., $\theta \to \pi/2$), the estimation error diverges. This is because the MA trajectory in the $x$-$y$ plane yields a small spatial aperture over the in-plane-direction orthogonal to $z$-axis, rendering it incapable of resolving the elevation angle of a target located in the $x$-$y$ plane. Notably, Theorem~\ref{theorem_xy_plane} is not restricted to the $x$-$y$ plane and can be directly generalized to any 2-D planar trajectory in 3-D space. This finding highlights the necessity of 3-D MA movement to ensure reliable target sensing over 3-D space in general.

	\section{Optimization Algorithm}
	In this section, we first present an algorithm to solve problem (P1). Then, we develop a simplified algorithm for solving problem (P3) in the special case where $\mathbb{D}$ contains a single direction only.
	
	\subsection{Algorithm for Solving Problem (P1)}
	We assume that $\mathcal{C}$ is a convex 3-D region, which ensures the convexity of the constraint in \eqref{P1-D}{\footnote{If the movement region $\mathcal{C}$ is non-convex, a maximal inscribed convex sub-region can be extracted via iterative regional inflation \cite{deits2015computing}. The 3-D MA trajectory can then be optimized within this convex sub-region using Algorithm~\ref{alg1}.}}. Although the constraints \eqref{P1c}, \eqref{P1-D}, and \eqref{P1e} in problem (P1) are convex w.r.t. $\bm{R}$ and $\bm{V}$, the function $F_{\bm{\chi}_q}(\bm{R})$ in \eqref{P1b} contains a non-convex fractional form, which makes problem (P1) difficult to solve directly. To address this issue, we adopt the SCA approach.	To facilitate convex reformulation, we first express the variance function ${\rm{var}}(\bm{x})$ and covariance function ${\rm{cov}}(\bm{x},\bm{y})$ in quadratic form:
	\begin{align}\label{B}
		{\rm{var}}(\bm{x}) &\triangleq \bm{x}^{\mathsf T} \bm{B} \bm{x}, \notag\\
		{\rm{cov}}(\bm{x},\bm{y}) &\triangleq \bm{x}^{\mathsf T} \bm{B} \bm{y},
	\end{align}
	where $\bm{B}\triangleq \frac{1}{N}\bm{I}_N-\frac{1}{N^2}\bm{1}_N\bm{1}_N^{\mathsf T}$ is a positive semi-definite (PSD) matrix, and $\bm{1}_N$ denotes the $N$-dimensional all-ones vector. Then, the APV covariance matrix $\bm{U}$ in \eqref{U} can be equivalently expressed as
	\begin{equation}
		\bm{U} = \bm{R} \bm{B} \bm{R}^{\mathsf T}.
	\end{equation}
	Moreover, by defining $\bm{\Phi} \triangleq [\bm{f}, \bm{g}] \in \mathbb{R}^{3 \times 2}$, we have $\bm{\Phi}^{\mathsf T} \bm{U} \bm{\Phi} = \begin{bmatrix}
		\bm{f}^{\mathsf T} \bm{U} \bm{f} & \bm{f}^{\mathsf T} \bm{U} \bm{g} \\
		\bm{f}^{\mathsf T} \bm{U} \bm{g} & \bm{g}^{\mathsf T} \bm{U} \bm{g}
	\end{bmatrix}$. Then, using the inversion formula for a $2\times2$ matrix, i.e., $\begin{bmatrix}
		a & b \\
		c & d
	\end{bmatrix}^{-1}=\frac{1}{ad-bc}\begin{bmatrix}
		d & -b \\
		-c & a
	\end{bmatrix}$, $F_{\bm{\chi}}(\bm{R})$ can be written as the trace of a $2 \times 2$ inverse matrix:
	\begin{equation}
		F_{\bm{\chi}}(\bm{R}) = \text{Tr}\left( (\bm{\Phi}^{\mathsf T} \bm{U} \bm{\Phi})^{-1} \right).
	\end{equation}
	The function $\text{Tr}(\bm{A}^{-1})$ is convex and matrix-decreasing over the positive definite cone, i.e., $\text{Tr}(\bm{A}^{-1})\leq\text{Tr}(\bm{B}^{-1})$ when $\bm{A}\succeq\bm{B}$. To construct a convex upper-bound for $F_{\bm{\chi}}(\bm{R})$, we linearize the inner APV covariance matrix $\bm{U}$. Let $\bm{R}^p$ denote the APM obtained at the $p$th iteration of SCA. Since $\bm{U}$ is quadratic of $\bm{R}$, its first-order Taylor expansion at $\bm{R}^p$ yields the following global matrix lower-bound:
	\begin{align}
		\bm{U} &\succeq \bar{\bm{U}}(\bm{R}|\bm{R}^p) \\
		&\triangleq \bm{R}^p \bm{B} (\bm{R}^p)^{\mathsf T} + \bm{R}^p \bm{B} (\bm{R} - \bm{R}^p)^{\mathsf T} + (\bm{R} - \bm{R}^p) \bm{B} (\bm{R}^p)^{\mathsf T} \notag\\
		&= \bm{R}^p \bm{B} \bm{R}^{\mathsf T} + \bm{R} \bm{B} (\bm{R}^p)^{\mathsf T} - \bm{R}^p \bm{B} (\bm{R}^p)^{\mathsf T}. \notag
	\end{align}
	Due to the matrix-decreasing property of the function $\text{Tr}(\bm{A}^{-1})$, replacing $\bm{U}$ with its lower-bound $\bar{\bm{U}}(\bm{R}|\bm{R}^p)$ yields the following convex surrogate function:
	\begin{align}\label{surrogate}
		F_{\bm{\chi}}(\bm{R}) \leq \bar{F}_{\bm{\chi}}(\bm{R} | \bm{R}^p) \triangleq \text{Tr}\left( \left( \bm{\Phi}^{\mathsf T} \bar{\bm{U}}(\bm{R}|\bm{R}^p) \bm{\Phi} \right)^{-1} \right).
	\end{align}
	This surrogate function is convex because $\bar{\bm{U}}(\bm{R}|\bm{R}^p)$ is affine of $\bm{R}$ and $\text{Tr}(\bm{A}^{-1})$ is convex for $\bm{A}\succeq0$. Therefore, at the $p$th SCA iteration, the MA trajectory is updated by solving the following convex optimization problem:
	\begin{subequations}
		\begin{align}
			\textrm {(P2)}~~\min_{\bm{R},\bm{V},\delta} \quad & \delta \label{P3a}\\
			\text{s.t.} \quad & \bar{F}_{\bm{\chi}_q}(\bm{R} | \bm{R}^p) \leq \delta, \quad q=1, \dots, Q, \label{P3b} \\
			&\eqref{P1c},~\eqref{P1-D},~\eqref{P1e}. \notag
		\end{align}
	\end{subequations}
	Problem (P2) is a convex optimization problem since each constraint is convex w.r.t. $\bm{R}$, $\bm{V}$, and $\delta$. Therefore, it can be efficiently solved using off-the-shelf convex optimization solvers, e.g., CVX \cite{grantcvx}.
	
	Based on the solution to problem (P2), the overall algorithm for solving problem (P1) is summarized in Algorithm~\ref{alg1}. Specifically, in lines 3–7, the APM $\bm{R}$ and AVM $\bm{V}$ are iteratively refined by solving problem (P2). The iteration proceeds until the variation of $\delta$ in \eqref{P3a} between two successive updates becomes smaller than a prescribed tolerance $\epsilon$.
	
	\begin{algorithm}[!t]
		\caption{SCA Algorithm for Problem (P1)}
		\label{alg1}
		\begin{algorithmic}[1]
			\STATE \textbf{Input:} $N$, $\{\bm{\chi}_q\}_{q=1}^Q$, $\epsilon$, $\bm{R}^0$, $\mathcal{C}$.
			\STATE \textbf{Initialize:} $p \gets 0$, $\bm{R} \gets \bm{R}^0$, $\delta \gets 0$.
			
			\WHILE{The increment of $\delta$ is greater than $\epsilon$}
			\STATE Solve problem (P2) to obtain $\bm{R}^{p+1}$.
			\STATE $p \gets p + 1$.
			\STATE Update $\delta \gets \max_{1\leq q\leq Q} F_{\bm{\chi}_q}(\bm{R}^p)$.
			\ENDWHILE
			
			\STATE \textbf{Output:} $\bm{R}$, $\bm{V}$.
		\end{algorithmic}
	\end{algorithm}
	
	Since the objective value is non-increasing throughout the iterations and is lower-bounded by zero, Algorithm~\ref{alg1} is guaranteed to converge. The computational complexity of solving convex problem (P2) via the interior-point method is $\mathcal{O}\big((N(N+Q)^{0.5}(9N^2+N+Q))\ln(1/\kappa)\big)$, where $\kappa$ represents the solution accuracy of the interior-point method. Let $I$ denote the maximum number of SCA iterations (lines 3–7). Then, the total computational complexity of Algorithm~\ref{alg1} is given by $\mathcal{O}\big((N(N+Q)^{0.5}(9N^2+N+Q))\ln(1/\kappa)I\big)$.
	
	\subsection{Simplified Algorithm When $\mathbb{D}$ Contains A Single Direction Only}
	As discussed in Section III-A, when the angular region of interest $\mathbb{D}$ contains a single target direction only, the 3-D trajectory optimization problem reduces to a 2-D trajectory optimization in the $\bar{x}$-$\bar{y}$ plane orthogonal to the target direction. Consequently, the APV along the $\bar{z}$-axis can be omitted from the optimization variables, significantly reducing the problem dimension. Accordingly, the MA trajectory optimization problem can be simplified as
	\begin{subequations}
		\begin{align}
			\textrm{(P3)}~~
			\min_{\bm{R},\bm{V}} \quad
			& \frac{{\rm var}(\bar{\bm{x}})+{\rm var}(\bar{\bm{y}})}
			{{\rm var}(\bar{\bm{x}}){\rm var}(\bar{\bm{y}})
				-{\rm cov}(\bar{\bm{x}},\bar{\bm{y}})^2} \label{P2a} \\
			\text{s.t.} \quad
			& \bar{\bm{r}}_n = \bm{Q}\bm{r}_n, \quad n=1, \dots, N, \label{P2b} \\
			&\eqref{P1c},~\eqref{P1-D},~\eqref{P1e}. \notag
		\end{align}
	\end{subequations}
	
	Let $\bm{R}_{xy} \triangleq [\bar{\bm{x}}, \bar{\bm{y}}]^{\mathsf T} \in \mathbb{R}^{2 \times N}$ denote the rotated 2-D APM in the $\bar{x}$-$\bar{y}$ coordinate system. The $2 \times 2$ APV covariance matrix in the $\bar{x}$-$\bar{y}$ plane can be expressed as 
	\begin{align}\label{Uxy}
		\bm{U}_{xy} &\triangleq \begin{bmatrix}
			{\rm var}(\bar{\bm{x}}) & {\rm cov}(\bar{\bm{x}},\bar{\bm{y}}) \\
			{\rm cov}(\bar{\bm{x}},\bar{\bm{y}}) & {\rm var}(\bar{\bm{y}})
		\end{bmatrix} \\
		&= \bm{R}_{xy} \bm{B} \bm{R}_{xy}^{\mathsf T}. \notag
	\end{align}
	Then, the objective function in \eqref{P2a} can be represented as the trace of the inverse of $\bm{U}_{xy}$, i.e.,
	\begin{align}
		\frac{{\rm var}(\bar{\bm{x}})+{\rm var}(\bar{\bm{y}})}
		{{\rm var}(\bar{\bm{x}}){\rm var}(\bar{\bm{y}})
			-{\rm cov}(\bar{\bm{x}},\bar{\bm{y}})^2} = \text{Tr}\left( \bm{U}_{xy}^{-1} \right).
	\end{align}	
	Therefore, similar to problem (P1), we can apply the SCA approach to handle the non-convexity of the objective function $\text{Tr}\left( \bm{U}_{xy}^{-1} \right)$. Specifically, let $\bm{R}_{xy}^p$ denote the rotated 2-D APM obtained at the $p$th iteration of SCA. By taking the first-order Taylor expansion of $\bm{U}_{xy}$ at $\bm{R}_{xy}^p$, we obtain the following global matrix lower-bound:
	\begin{align}
		\bm{U}_{xy} &\succeq \bar{\bm{U}}_{xy}(\bm{R}_{xy}|\bm{R}_{xy}^p) \\ &\triangleq \bm{R}_{xy}^p \bm{B} \bm{R}_{xy}^{\mathsf T} + \bm{R}_{xy} \bm{B} (\bm{R}_{xy}^p)^{\mathsf T}  - \bm{R}_{xy}^p \bm{B} (\bm{R}_{xy}^p)^{\mathsf T}. \notag
	\end{align}
	Then, replacing $\bm{U}_{xy}$ with its matrix lower-bound $\bar{\bm{U}}_{xy}(\bm{R}_{xy}|\bm{R}_{xy}^p)$ yields the following convex surrogate function for problem (P3):
	\begin{align}
		\text{Tr}\left( \bm{U}_{xy}^{-1} \right) \leq \text{Tr}\left( \bar{\bm{U}}_{xy}(\bm{R}_{xy}|\bm{R}_{xy}^p)^{-1} \right).
	\end{align}
	Therefore, at the $p$th SCA iteration, the MA trajectory is updated by solving the following simplified convex optimization problem:
	\begin{subequations}
		\begin{align}
			\textrm{(P4)}~~\min_{\bm{R}, \bm{V}} \quad & \text{Tr}\left( \bar{\bm{U}}_{xy}(\bm{R}_{xy}|\bm{R}_{xy}^p)^{-1} \right) \label{P4a} \\
			\text{s.t.} \quad &\eqref{P2b},~\eqref{P1c},~\eqref{P1-D},~\eqref{P1e}. \notag
		\end{align}
	\end{subequations}
	Problem (P4) is a convex optimization problem and can be efficiently solved using off-the-shelf convex optimization solvers, e.g., CVX \cite{grantcvx}. Since $\mathbb{D}$ contains only a single direction, the computational complexity of solving (P4) using the interior-point method is reduced to $\mathcal{O}\big(N^{3.5}\ln(1/\kappa)\big)$, thereby significantly reducing the computational complexity in this special case.

	\section{Numerical Results}

	\begin{figure}[!t]
		\centering
		\subfigure[Without the region size constraint]{
			\begin{minipage}{.47\textwidth}
				\centering
				\includegraphics[scale=.48]{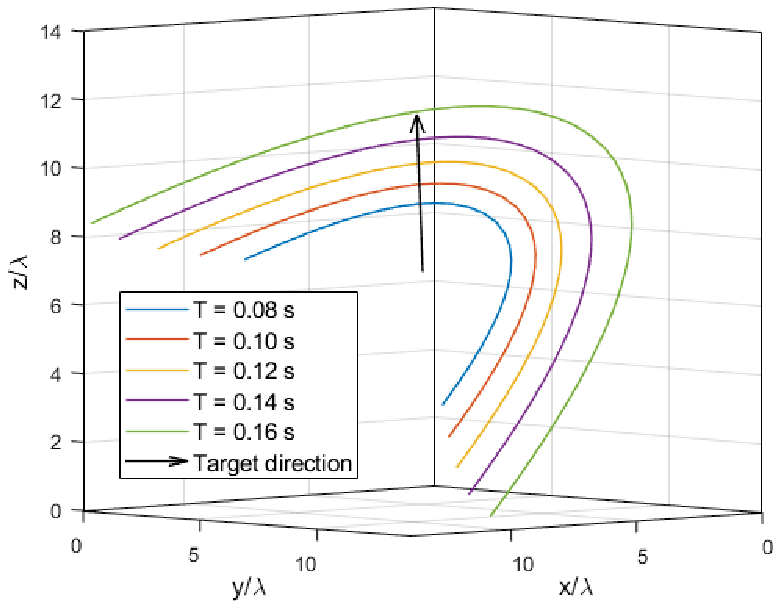}
			\end{minipage}
			\label{Tra_opt_case1_infinite}
		}
		\subfigure[With the region size constraint]{
			\begin{minipage}{.47\textwidth}
				\centering
				\includegraphics[scale=.48]{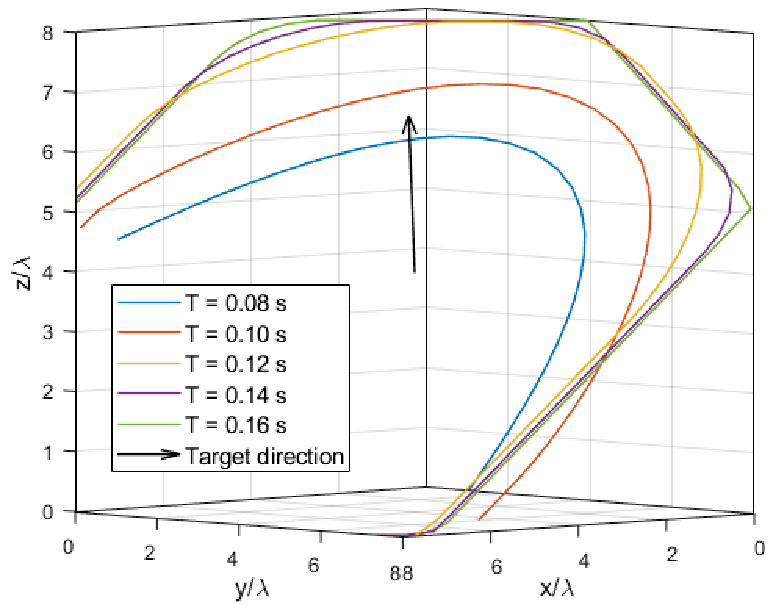}
			\end{minipage}
			\label{Tra_opt_case1_finite}
		}
		\caption{Illustration of MA trajectories when $\mathbb{D}$ contains a single direction.}
		\label{FIG5}
	\end{figure}
	
	\begin{figure}[!t]
		\centering
		\includegraphics[width=70mm]{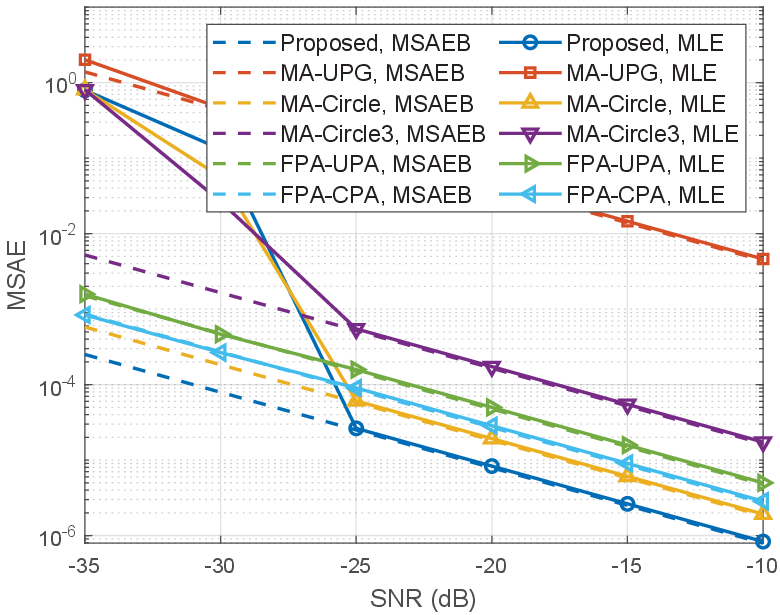}
		\caption{MSAE versus SNR for different MA trajectories when $\mathbb{D}$ contains a single direction.}
		\label{FIG6}
	\end{figure}
	
	\begin{figure}[!t]
		\centering
		\subfigure[Proposed]{
			\begin{minipage}{.14\textwidth}
				\centering
				\includegraphics[scale=.37]{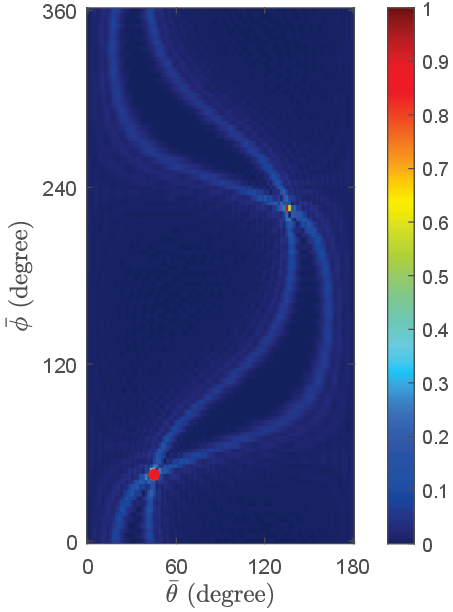}
			\end{minipage}
			\label{correlation_proposed}
		}
		\subfigure[MA-UPG]{
			\begin{minipage}{.14\textwidth}
				\centering
				\includegraphics[scale=.37]{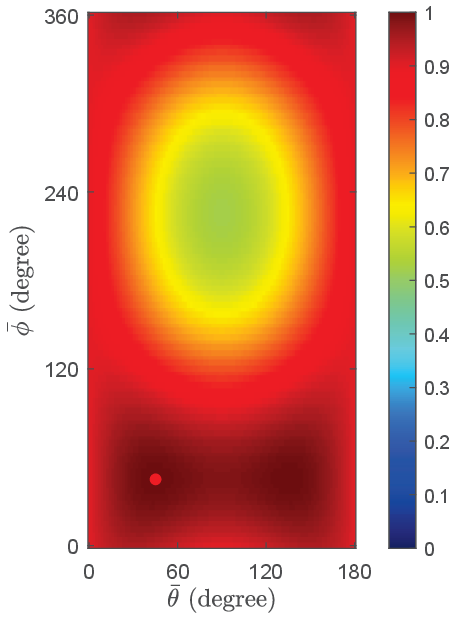}
			\end{minipage}
			\label{correlation_upa}
		}
		\subfigure[MA-Circle]{
			\begin{minipage}{.14\textwidth}
				\centering
				\includegraphics[scale=.37]{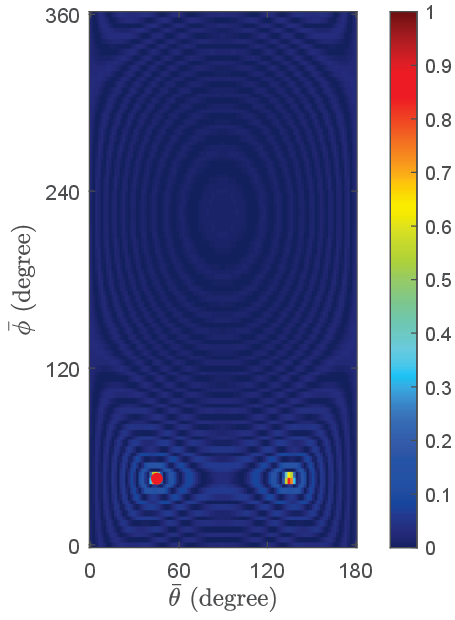}
			\end{minipage}
			\label{correlation_circle}
		}
		\\
		\subfigure[MA-Circle3]{
			\begin{minipage}{.14\textwidth}
				\centering
				\includegraphics[scale=.37]{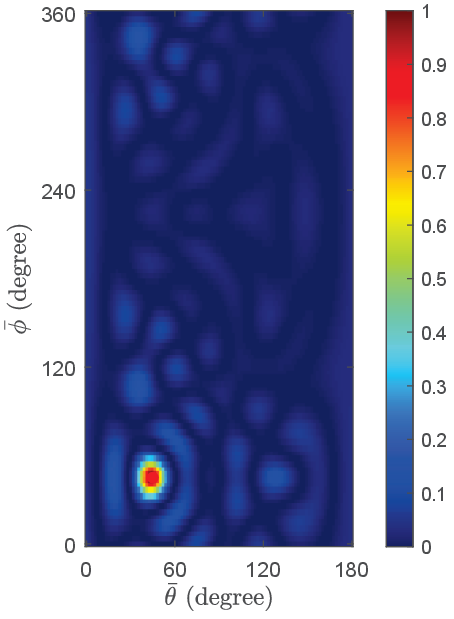}
			\end{minipage}
			\label{correlation_circle3}
		}
		\subfigure[FPA-UPA]{
			\begin{minipage}{.14\textwidth}
				\centering
				\includegraphics[scale=.37]{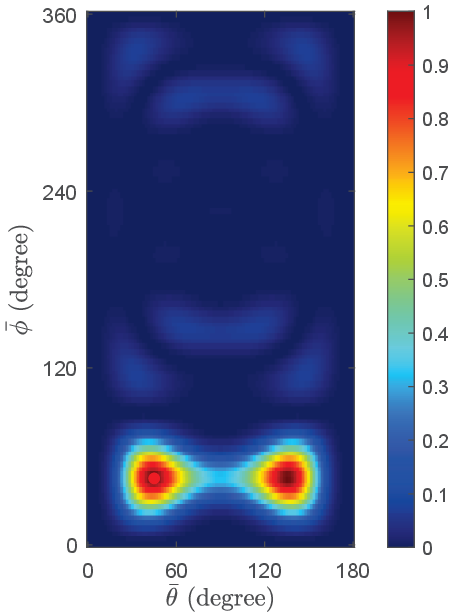}
			\end{minipage}
			\label{correlation_FPA_UPA}
		}
		\subfigure[FPA-CPA]{
			\begin{minipage}{.14\textwidth}
				\centering
				\includegraphics[scale=.37]{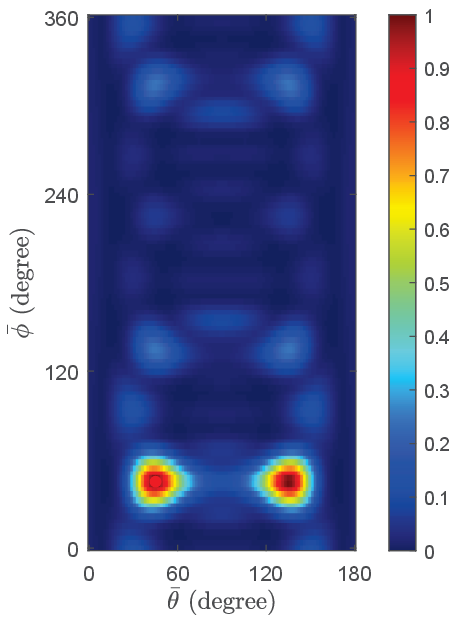}
			\end{minipage}
			\label{correlation_FPA_CPA}
		}
		\caption{Comparison of steering vector correlation with different MA trajectories when $\mathbb{D}$ contains a single direction.}
		\label{FIG7}
	\end{figure}

In this section, we provide numerical results to evaluate the performance of the proposed 3-D MA trajectory optimization for target direction vector estimation. The system parameters are configured as follows: the sampling period is $T_s=10$~µs, the carrier wavelength is $\lambda = 0.05$~m corresponding to the carrier frequency of $6$~GHz, and the maximum speed of the MA is $v^{\rm m} = 10$~m/s. The maximum moving distance of the MA between two adjacent snapshots is thus $\Delta \triangleq v^{\rm m} T_s = 2 \times 10^{-3} \lambda$. The average received SNR is defined as $P|\beta|^2/\sigma^2$. The antenna movement region $\mathcal{C}$ is a cubic region of size $A \times A \times A$. For the SCA-based trajectory optimization in Algorithm~\ref{alg1}, the convergence threshold is set as $\epsilon = 10^{-4}$. To facilitate the optimization, every $250$ consecutive elements in the AVM $\bm{V}$ share an identical value.

To comprehensively evaluate the performance of the proposed MA trajectory, we consider three benchmark MA trajectories for comparison:
1) \textbf{Uniform planar grids (UPG)}: The MA moves at the maximum speed $v^{\rm m}$ to visit the points of $\sqrt{N}\times \sqrt{N}$ UPG in the $x$-$y$ plane, with adjacent grid spacing being $\Delta$.
2) \textbf{Circle}: The MA moves at $v^{\rm m}$ to form a circular trajectory with radius $\frac{\Delta}{2\sin\left(\tfrac{\pi}{N}\right)}$ in the $x$-$y$ plane.
3) \textbf{Circle3}: As shown in Fig.~\ref{isotropic_trajectory}, the MA moves at $v^{\rm m}$ to form an isotropic 3-D trajectory consisting of three orthogonal circles of equal radius $\frac{\Delta}{2\sin\left(\tfrac{3\pi}{N}\right)}$ located in the $x$-$y$, $x$-$z$, and $y$-$z$ planes, respectively. Moreover, we consider the FPA receiver equipped with an array of $16$ antennas to estimate the target direction vector over $N$ snapshots. The benchmark FPA array schemes considered are as follows: 1) \textbf{Uniform planar array (UPA)}: The $4\times 4$ UPA with half-wavelength inter-antenna spacing.
2) \textbf{Coprime planar array (CPA)}: The $4\times 4$ CPA with the 2D APM $\bm{R}_{xy}^{\rm CPA}\in\mathbb{R}^{2 \times 16}$ given by $\bm{R}_{xy}^{\rm CPA}[p,q] \in \frac{\lambda}{2}\{0,2,3,4\}$, for $1\leq p\leq 2$ and $1\leq q\leq 16$ \cite{zheng2017generalized}.

\subsection{$\mathbb{D}$ Contains A Single Direction Only}

First, we investigate the special case where the angular region $\mathbb{D}$ contains only a single target direction. The physical elevation and azimuth AoAs are set as $\theta = 45^\circ$ and $\phi = 45^\circ$. We apply the simplified optimization algorithm for problem (P3) to obtain the proposed MA trajectory.

Fig.~\ref{FIG5} illustrates the optimized MA trajectories under different total sensing time $T$. In Fig.~\ref{Tra_opt_case1_infinite}, the antenna movement region is assumed to be sufficiently large such that constraint \eqref{P1-D} is inactive. In contrast, Fig.~\ref{Tra_opt_case1_finite} considers a finite movement region with size $A = 8\lambda$. It is observed that the optimized MA trajectory is strictly confined in the 2-D plane orthogonal to the target direction $\bm{\eta}$, which fully corroborates the theoretical result in Section III-A that antenna movement along the target direction yields no estimation gain. Moreover, as $T$ increases, the MA trajectory progressively expands within the orthogonal plane, effectively enlarging the virtual aperture formed by antenna movement. This expansion improves the angular resolution and consequently enhances the direction vector estimation accuracy.

Fig.~\ref{FIG6} compares the direction vector estimation MSAEs versus SNR for different MA trajectories and FPA arrays. We set $T = 0.16$~s and $A=15\lambda$. The MLE of the direction vector is obtained via \eqref{MLE1-D}. It is observed that the MLE curves tightly match the theoretical MSAEB for all schemes when the SNR is sufficiently large (i.e., SNR $\geq -25$~dB). Moreover, the proposed trajectory significantly outperforms all benchmark schemes in this regime. This is because the benchmark trajectories and FPA arrays are constrained in the horizontal $x$-$y$ plane, their effective apertures projected onto the plane orthogonal to $\bm{\eta}$ are severely degraded, leading to poor estimation accuracy. Furthermore, although FPA arrays provide array gain compared with the proposed single MA-enabled  sensing scheme, antenna movement can synthesize a significantly larger virtual aperture, thereby achieving higher angular resolution.

To gain more insights, Fig.~\ref{FIG7} shows the normalized steering vector correlation $\frac{1}{N^2}|\bm{\alpha}(\bm{R},\bm{\eta})^{\mathsf H} \bm{\alpha}(\bm{R},\bar{\bm{\eta}})|^2$, where $\bar{\bm{\eta}} = [\sin\bar{\theta} \cos\bar{\phi}, \sin\bar{\theta} \sin\bar{\phi}, \cos\bar{\theta}]^{\mathsf T}$ with $\bar{\theta}\in[0,\pi]$ and $\bar{\phi}\in[0,2\pi]$, and $\bm{\eta}=[\sin\theta \cos\phi, \sin\theta \sin\phi, \cos\theta]^{\mathsf T}=[0.5,0.5,0.71]^{\mathsf T}$. We set $T = 0.16$~s and $A=15\lambda$. As illustrated in Fig.~\ref{FIG5}, the proposed MA trajectory forms a virtual aperture of approximately $12\lambda \times 14\lambda$. By comparison, the circular trajectory with radius $R^{\rm cir}=\frac{\Delta}{2\sin\left(\tfrac{\pi}{N}\right)}=5\lambda$ yields an aperture of $2R^{\rm cir}\times2R^{\rm cir}=10\lambda \times 10\lambda$. The UPG trajectory forms an aperture of only $\sqrt{N}\Delta \times \sqrt{N}\Delta = 0.25\lambda \times 0.25\lambda$, which is substantially smaller than those of the proposed and circular trajectories. The FPA-UPA array has an aperture of $1.5\lambda \times 1.5\lambda$, and the FPA-CPA array has an aperture of $4\lambda \times 4\lambda$. As a result, the proposed MA trajectory achieves a significantly narrower main lobe than the benchmark schemes, leading to improved spatial resolution and lower direction vector estimation error when the SNR is sufficiently large.

\begin{figure}[!t]
	\centering
	\includegraphics[width=70mm]{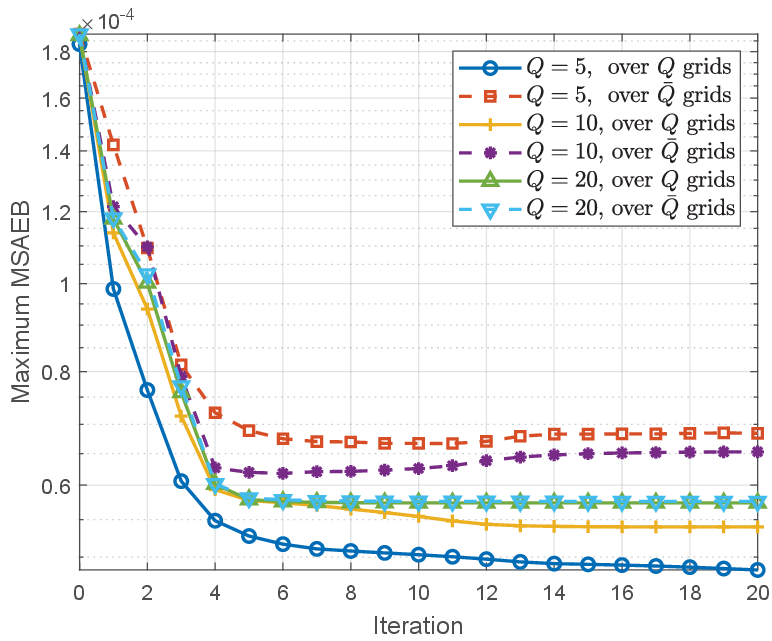}
	\caption{Convergence behavior of Algorithm~\ref{alg1} with different $Q$, where $\bar{Q}=1000$.}
	\label{FIG8}
\end{figure}

\begin{figure}[!t]
	\centering
	\subfigure[Without the region size constraint]{
		\begin{minipage}{.47\textwidth}
			\centering
			\includegraphics[scale=.48]{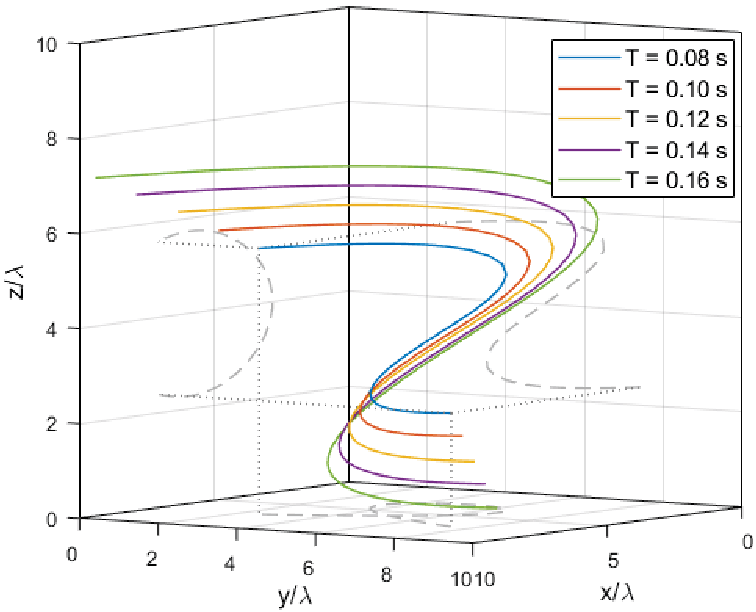}
		\end{minipage}
		\label{Tra_opt_infinite}
	}
	\subfigure[With the region size constraint]{
		\begin{minipage}{.47\textwidth}
			\centering
			\includegraphics[scale=.48]{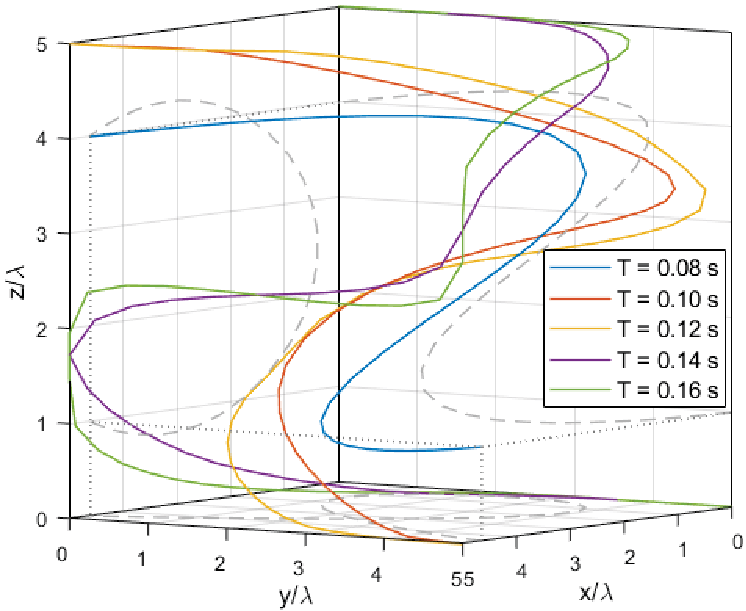}
		\end{minipage}
		\label{Tra_opt_finite}
	}
	\caption{Illustration of MA trajectories when $\mathbb{D}$ is a continuous set.}
	\label{FIG9}
\end{figure}

\begin{figure}[!t]
	\centering
	\includegraphics[width=70mm]{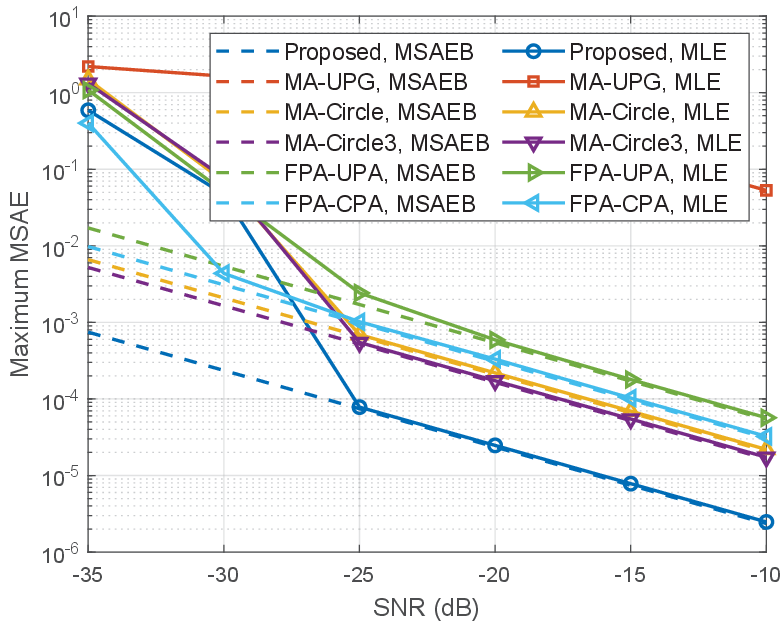}
	\caption{Maximum MSAE versus SNR for different MA trajectories when $\mathbb{D}$ is a continuous set.}
	\label{FIG10}
\end{figure}

\begin{figure}[!t]
	\centering
	\includegraphics[width=70mm]{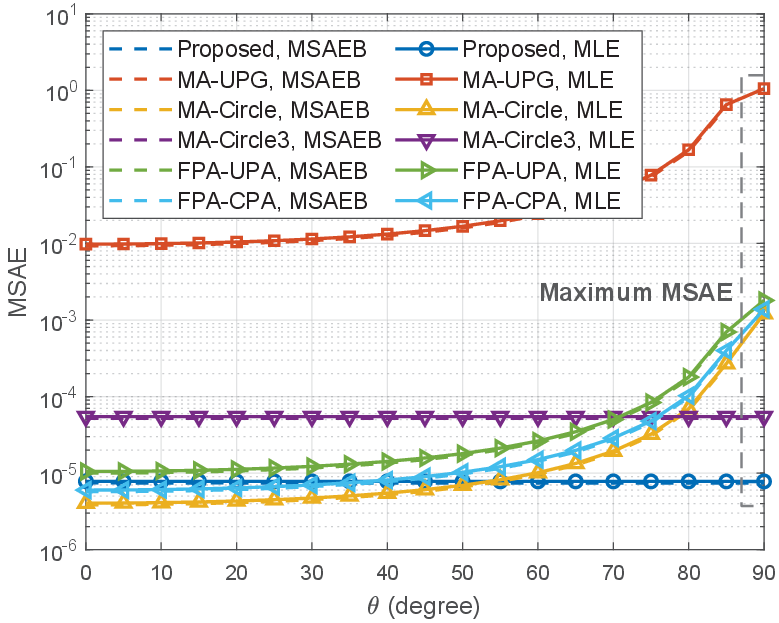}
	\caption{MSAE versus $\theta$ for different MA trajectories when $\mathbb{D}$ is a continuous set.}
	\label{FIG11}
\end{figure}

\subsection{$\mathbb{D}$ is a Continuous Set}

Next, we evaluate the worst-case sensing performance over a continuous angular region $\mathbb{D} = [0, 80^\circ] \times [0, 360^\circ]$ to achieve robust sensing for target direction.  We apply Algorithm~\ref{alg1} for problem (P1) to obtain the proposed MA trajectory.

Fig.~\ref{FIG8} illustrates the convergence behavior of the proposed Algorithm~\ref{alg1} for different numbers of discrete grid points $Q$. We plot both the maximum (i.e., worst-case) MSAEB evaluated over the $Q$ optimization grids and the true maximum MSAEB evaluated over an ultra-dense set of $\bar{Q}=1000$ grid points. Since the proposed Algorithm~\ref{alg1} performs optimization over only $Q$ grid points, convergence is guaranteed for the objective evaluated on these $Q$ grids, whereas convergence over the ultra-dense set of $\bar{Q}$ grids is not guaranteed. When the grid size is small (e.g., $Q=5$), a significant performance gap is observed between the optimized grids and the ultra-dense evaluation. As $Q$ increases to $20$, this gap is substantially reduced, indicating that discretizing $\mathbb{D}$ into a moderate number of grid points is sufficient to ensure robustness over the continuous angular region. Moreover, for $Q=20$, the algorithm converges within $8$ iterations, demonstrating the effectiveness and fast convergence of Algorithm~\ref{alg1}. Thus, $Q=20$ is adopted for the simulation results in the following. 

Fig.~\ref{FIG9} illustrates the optimized MA trajectories under different total sensing time $T$. In Fig.~\ref{Tra_opt_infinite}, the antenna movement region is assumed to be sufficiently large such that constraint \eqref{P1-D} is inactive. In contrast, Fig.~\ref{Tra_opt_finite} considers a finite movement region with size $A = 5\lambda$. It is observed that the optimized MA trajectory expands along the $x$-, $y$-, and $z$-axes to ensure robust direction vector estimation over the continuous angular region $\mathbb{D}$. Moreover, as $T$ increases, the MA trajectory progressively expands within the antenna movement region $\mathcal{C}$, thereby enlarging the effective virtual aperture and enhancing the direction vector estimation accuracy.

Fig.~\ref{FIG10} compares the maximum MSAE versus SNR for different schemes. As shown in Theorem~\ref{theorem_xy_plane}, for the benchmark schemes, the maximum MSAE occurs at the largest elevation angle, i.e., $\theta = 80^\circ$. For the proposed MA trajectory, the maximum MSAE is evaluated over $\bar{Q}$ discrete grid points. As observed in Fig.~\ref{FIG10}, the MLE curves closely match the theoretical MSAEB for all schemes when the SNR is sufficiently high. Moreover, the proposed trajectory significantly outperforms all benchmark schemes, highlighting the necessity of 3-D MA trajectory optimization for robust wide-angle sensing.

Finally, Fig.~\ref{FIG11} shows the MSAE versus the target elevation angle $\theta$, with the SNR fixed as $-15$~dB and $\phi = 0^\circ$. As $\theta$ approaches the endfire direction of the movement plane (i.e., $\theta \to 90^\circ$), the MSAE of the UPA and circular trajectories as well as FPA arrays increases dramatically, which is consistent with the theoretical result in Theorem~\ref{theorem_xy_plane}. In contrast, the 3-D circular and proposed trajectories exhibit flat MSAE curves across the entire angle range. This demonstrates that fully exploiting the 3-D antenna movement can achieve uniform high-accuracy direction estimation over the entire angular region.

	\section{Conclusion}
	In this paper, a novel wireless sensing system was presented where an MA continuously moves and receives sensing signals within a 3-D region to enhance sensing performance compared with conventional FPA-based sensing. We demonstrated that the performance of direction vector estimation for a target is fundamentally related to the 3-D MA trajectory in terms of the MSAEB, which was adopted as a coordinate-invariant performance metric. In particular, the closed-form expression of the MSAEB was derived as a function of the trajectory covariance matrix. Theoretical analysis showed that 2-D antenna movement suffers from performance divergence for target direction close to the endfire direction of the 2-D MA plane, whereas 3-D movement achieves isotropic sensing performance over the entire angular region. To achieve robust sensing performance, a min-max optimization problem was formulated to minimize the worst-case MSAEB over a given continuous angular region wherein the target is located. An efficient SCA algorithm was developed to optimize the 3-D MA trajectory and obtain a locally optimal solution. Numerical results confirmed that the proposed 3-D MA sensing scheme is able to significantly reduce the worst-case MSAE compared with conventional arrays with FPAs and MA systems with 2-D movement only, thus achieving more accurate and robust direction estimation over the entire angular region.

	\appendix
	\subsection{Proof of Theorem 1}
	Define the unknown parameter vector as	$\bm{\kappa} \triangleq [\theta,\phi,\Re(\tilde{\beta}),\Im(\tilde{\beta})]^{\mathsf T} \in \mathbb{R}^{4\times1}$,	which comprises the physical elevation and azimuth AoAs as well as the real and imaginary components of the complex channel gain. For convenience in deriving the CRB, we define	$\bm{u}(\bm{\chi},\bm{\beta}) \triangleq \tilde{\beta}\bm{\alpha}(\bm{R},\bm{\eta}) \in \mathbb{C}^{N\times1}$, 	where $\bm{\beta} \triangleq [\Re(\tilde{\beta}),\Im(\tilde{\beta})]^{\mathsf T}$ contains the real-valued channel coefficients. Let $\bm{F} \in \mathbb{R}^{4\times4}$ denote the Fisher information matrix (FIM) associated with the estimation of $\bm{\kappa}$. The $(p,q)$th entry of $\bm{F}$, for $p,q = 1,2,3,4$, is given by
	\begin{align}
		\bm{F}[p,q] &= 2\Re\left( \frac{\partial \bm{u}(\bm{\chi},\bm{\beta})^{\mathsf H}}{\partial \bm{\kappa}[p]} \bm{R}_{\bm{n}}^{-1} \frac{\partial \bm{u}(\bm{\chi},\bm{\beta})}{\partial \bm{\kappa}[q]} \right) \notag\\
		&=\frac{2}{\sigma^2}\Re\left( \frac{\partial \bm{u}(\bm{\chi},\bm{\beta})^{\mathsf H}}{\partial \bm{\kappa}[p]} \frac{\partial \bm{u}(\bm{\chi},\bm{\beta})}{\partial \bm{\kappa}[q]} \right),
	\end{align}
	where $\bm{R}_{\bm{n}}=\sigma^2 \bm{I}_N$ is the noise covariance matrix.	Then, the FIM $\bm{F}$ can be expressed in a block-structured form as
	\begin{align}\label{F}
		\bm{F} = \begin{bmatrix}
			\bm{J}_{\bm{\chi},\bm{\chi}}  & \bm{J}_{\bm{\chi},\bm{\beta}} \\
			\bm{J}_{\bm{\chi},\bm{\beta}}^{\mathsf T}  & \bm{J}_{\bm{\beta},\bm{\beta}} 
		\end{bmatrix},
	\end{align}
	where the submatrices $\bm{J}_{\bm{\chi},\bm{\chi}}\in \mathbb{R}^{2 \times 2}$, $\bm{J}_{\bm{\chi},\bm{\beta}}\in \mathbb{R}^{2 \times 2}$, and $\bm{J}_{\bm{\beta},\bm{\beta}}\in \mathbb{R}^{2 \times 2}$ are given by
	\begin{align}\label{J}
		\bm{J}_{\bm{\chi},\bm{\chi}} &= \frac{2}{\sigma^2}\Re\left( \frac{\partial \bm{u}(\bm{\chi},\bm{\beta})^{\mathsf H}}{\partial \bm{\chi}} \frac{\partial \bm{u}(\bm{\chi},\bm{\beta})}{\partial \bm{\chi}} \right) \notag\\
		\bm{J}_{\bm{\chi},\bm{\beta}} &= \frac{2}{\sigma^2}\Re\left( \frac{\partial \bm{u}(\bm{\chi},\bm{\beta})^{\mathsf H}}{\partial \bm{\chi}} \frac{\partial \bm{u}(\bm{\chi},\bm{\beta})}{\partial \bm{\beta}} \right) \notag\\
		\bm{J}_{\bm{\beta},\bm{\beta}} &= \frac{2}{\sigma^2}\Re\left( \frac{\partial \bm{u}(\bm{\chi},\bm{\beta})^{\mathsf H}}{\partial \bm{\beta}} \frac{\partial \bm{u}(\bm{\chi},\bm{\beta})}{\partial \bm{\beta}} \right).
	\end{align}
	Then, by applying the inversion formula for partitioned matrices, the upper-left $2\times2$ principal submatrix of $\bm{F}^{-1}$ can be written as
	\begin{align}\label{Lambda}
		\bm{\Lambda}\triangleq \begin{bmatrix}
			\bm{F}^{-1}[1,1]  & \bm{F}^{-1}[1,2] \\
			\bm{F}^{-1}[2,1]  & \bm{F}^{-1}[2,2] 
		\end{bmatrix} = \left[\bm{J}_{\bm{\chi},\bm{\chi}} - \bm{J}_{\bm{\chi},\bm{\beta}} \bm{J}_{\bm{\beta},\bm{\beta}}^{-1} \bm{J}_{\bm{\chi},\bm{\beta}}^{\mathsf T}\right]^{-1}.
	\end{align}
	Accordingly, the CRBs associated with the elevation and azimuth AoA estimation are obtained as
	\begin{align}
		{\rm CRB}_\theta(\bm{R})=\bm{\Lambda}[1,1], \quad
		{\rm CRB}_\phi(\bm{R})=\bm{\Lambda}[2,2].
	\end{align}
	To simplify the notation, we re-denote $\bm{\alpha}(\bm{R},\bm{\eta})$ as $\bm{\alpha}$ in the sequel, and further define the partial derivative of $\bm{\alpha}$ w.r.t. $\bm{\chi}$ as $\dot{\bm{\alpha}}_{\bm{\chi}} \triangleq \frac{\partial \bm{\alpha}}{\partial \bm{\chi}} = \left[\dot{\bm{\alpha}}_\theta, \dot{\bm{\alpha}}_\phi\right] \in \mathbb{C}^{N\times 2}$,
	where the partial derivatives w.r.t. $\theta$ and $\phi$ are given by
	\begin{align}
		\dot{\bm{\alpha}}_\theta &\triangleq \frac{\partial \bm{\alpha}}{\partial \theta} = j\frac{2\pi}{\lambda} \textrm{diag}(\tilde{\bm{f}})\bm{\alpha} \in \mathbb{C}^{N\times1}, \\
		\dot{\bm{\alpha}}_\phi &\triangleq \frac{\partial \bm{\alpha}}{\partial \phi} = j\frac{2\pi}{\lambda} \textrm{diag}(\tilde{\bm{g}})\bm{\alpha} \in \mathbb{C}^{N\times1},\notag
	\end{align}
	with
	\begin{align}
		\tilde{\bm{f}}\triangleq \bm{R}^{\mathsf T}\bm{f} \in \mathbb{R}^{N\times1}, \quad
		\tilde{\bm{g}}\triangleq \sin\theta \bm{R}^{\mathsf T}\bm{g} \in \mathbb{R}^{N\times1}.
	\end{align}
	Based on the above definitions, the derivative of $\bm{u}(\bm{\chi},\bm{\beta})$ w.r.t. $\bm{\chi}$ can be compactly written as $\frac{\partial \bm{u}(\bm{\chi},\bm{\beta})}{\partial \bm{\chi}} = \tilde{\beta}\left[\dot{\bm{\alpha}}_\theta,  \dot{\bm{\alpha}}_\phi \right]$. 
	Moreover, differentiating $\bm{u}(\bm{\chi},\bm{\beta})$ w.r.t. $\bm{\beta}$ yields
	\begin{align}\label{fbeta}
		\frac{\partial \bm{u}(\bm{\chi},\bm{\beta})}{\partial \bm{\beta}} = \frac{\partial \tilde{\beta}\bm{\alpha}}{\partial \bm{\beta}} = \bm{\alpha} [1,j].
	\end{align}
	Then, $\bm{J}_{\bm{\chi},\bm{\chi}}$ in \eqref{J} can be further written as
	\begin{align}
		\bm{J}_{\bm{\chi},\bm{\chi}} &\overset{(a)}= \frac{2}{\sigma^2}\Re\left( \left(\frac{\partial \bm{u}(\bm{\chi},\bm{\beta})}{\partial \bm{\chi}}\right)^{\mathsf H} \frac{\partial \bm{u}(\bm{\chi},\bm{\beta})}{\partial \bm{\chi}} \right) \notag\\
		&\triangleq \frac{2|\tilde{\beta}|^2}{\sigma^2}\Re\left( \begin{bmatrix}
			Q_{\theta,\theta} & Q_{\theta,\phi} \\
			Q_{\phi,\theta}  & Q_{\phi,\phi}
		\end{bmatrix} \right),
	\end{align}
	where the equality $(a)$ holds because $\frac{\partial \bm{u}(\bm{\chi},\bm{\beta})^{\mathsf H}}{\partial \bm{\chi}} = \left(\frac{\partial \bm{u}(\bm{\chi},\bm{\beta})}{\partial \bm{\chi}}\right)^{\mathsf H}$. $Q_{a,b}\triangleq \dot{\bm{\alpha}}_a^{\mathsf H} \dot{\bm{\alpha}}_b$, $a,b\in\{\theta,\phi\}$, which can be further written as
	\begin{align}\label{Qab}
		Q_{\theta,\theta} &= \frac{4\pi^2}{\lambda^2} \bm{\alpha}^{\mathsf H}\textrm{diag}(\tilde{\bm{f}})\textrm{diag}(\tilde{\bm{f}})\bm{\alpha} \\
		&= \frac{4\pi^2}{\lambda^2} \bm{\alpha}^{\mathsf H}\textrm{diag}(|\tilde{\bm{f}}|^2)\bm{\alpha} \overset{(b)}= \frac{4\pi^2}{\lambda^2} \|\tilde{\bm{f}}\|_2^2, \notag
	\end{align}
	where the equality $(b)$ holds because $|\bm{\alpha}[n]|=1$, $n=1,2,\ldots,N$. Similarly, $Q_{\theta,\phi}$, $Q_{\phi,\theta}$, and $Q_{\phi,\phi}$ can be further represented as
	\begin{align}
		Q_{\theta,\phi} = Q_{\phi,\theta} = \frac{4\pi^2}{\lambda^2} \tilde{\bm{f}}^{\mathsf T} \tilde{\bm{g}}, \quad
		Q_{\phi,\phi} = \frac{4\pi^2}{\lambda^2} \|\tilde{\bm{g}}\|_2^2.
	\end{align}	
	Then, $\bm{J}_{\bm{\chi},\bm{\chi}}$ can be simplified as
	\begin{align}
		\bm{J}_{\bm{\chi},\bm{\chi}} &= \frac{8\pi^2|\tilde{\beta}|^2}{\sigma^2\lambda^2}\begin{bmatrix}
			\|\tilde{\bm{f}}\|_2^2 & \tilde{\bm{f}}^{\mathsf T} \tilde{\bm{g}} \\
			\tilde{\bm{f}}^{\mathsf T} \tilde{\bm{g}} & \|\tilde{\bm{g}}\|_2^2
		\end{bmatrix}.
	\end{align}
	Moreover, $\bm{J}_{\bm{\chi},\bm{\beta}}$ can be further written as
	\begin{align}
		\bm{J}_{\bm{\chi},\bm{\beta}} &= \frac{2}{\sigma^2}\Re\left( \frac{\partial \bm{u}(\bm{\chi},\bm{\beta})^{\mathsf H}}{\partial \bm{\chi}} \frac{\partial \bm{u}(\bm{\chi},\bm{\beta})}{\partial \bm{\beta}} \right) \notag\\
		&= \frac{2}{\sigma^2}\Re\left(\begin{bmatrix}
			\tilde{\beta}^{\mathsf *}\dot{\bm{\alpha}}_\theta^{\mathsf H} \bm{\alpha} [1,j] \\
			\tilde{\beta}^{\mathsf *}\dot{\bm{\alpha}}_\phi^{\mathsf H} \bm{\alpha} [1,j]
		\end{bmatrix}  \right),
	\end{align}
	where $\dot{\bm{\alpha}}_\theta^{\mathsf H} \bm{\alpha}$ can be derived following steps analogous to those used in \eqref{Qab} as
	\begin{align}
		\dot{\bm{\alpha}}_\theta^{\mathsf H} \bm{\alpha} &= -j\frac{2\pi}{\lambda} \bm{\alpha}^{\mathsf H} \textrm{diag}(\tilde{\bm{f}})\bm{\alpha} = -j\frac{2\pi}{\lambda} \textrm{sum}(\tilde{\bm{f}}),
	\end{align}
	where $\mathrm{sum}(\bm{a})$ denotes the sum of all entries of vector $\bm{a}$. Similarly, $\dot{\bm{\alpha}}_\phi^{\mathsf H} \bm{\alpha}$ can be further written as $\dot{\bm{\alpha}}_\phi^{\mathsf H} \bm{\alpha} = -j\frac{2\pi}{\lambda} \textrm{sum}(\tilde{\bm{g}})$.
	Then, $\bm{J}_{\bm{\chi},\bm{\beta}}$ can be simplified as
	\begin{align}
		\bm{J}_{\bm{\chi},\bm{\beta}} &= \frac{4\pi}{\sigma^2\lambda} \Re\left(-j\tilde{\beta}^{\mathsf *}\begin{bmatrix}
			\textrm{sum}(\tilde{\bm{f}}) & j\textrm{sum}(\tilde{\bm{f}}) \\
			 \textrm{sum}(\tilde{\bm{g}}) & j \textrm{sum}(\tilde{\bm{g}})
		\end{bmatrix}  \right) \\
		&=  \frac{4\pi}{\sigma^2\lambda}\begin{bmatrix}
			-\Im(\tilde{\beta})\textrm{sum}(\tilde{\bm{f}}) & \Re(\tilde{\beta})\textrm{sum}(\tilde{\bm{f}}) \\
			-\Im(\tilde{\beta}) \textrm{sum}(\tilde{\bm{g}}) & \Re(\tilde{\beta}) \textrm{sum}(\tilde{\bm{g}})
		\end{bmatrix}  \triangleq  \frac{4\pi}{\sigma^2\lambda}\bm{\zeta}\bar{\bm{\beta}}^{\mathsf T},  \notag
	\end{align}
	where $\bm{\zeta}\triangleq[\textrm{sum}(\tilde{\bm{f}}), \textrm{sum}(\tilde{\bm{g}})]^{\mathsf T}$ and $\bar{\bm{\beta}}\triangleq[-\Im(\tilde{\beta}),\Re(\tilde{\beta})]^{\mathsf T}$.
	
	Next, substituting \eqref{fbeta} into \eqref{J}, $\bm{J}_{\bm{\beta},\bm{\beta}}$ can be derived following steps similar to those in \eqref{Qab} as
	\begin{align}
		\bm{J}_{\bm{\beta},\bm{\beta}} &= \frac{2}{\sigma^2}\Re\left( \frac{\partial \bm{u}(\bm{\chi},\bm{\beta})^{\mathsf H}}{\partial \bm{\beta}} \frac{\partial \bm{u}(\bm{\chi},\bm{\beta})}{\partial \bm{\beta}} \right) \notag\\
		&= \frac{2}{\sigma^2}\Re\left( [1,j]^{\mathsf H}\bm{\alpha}^{\mathsf H}  \bm{\alpha} [1,j] \right) \notag\\
		&= \frac{2N}{\sigma^2}\Re\left( \begin{bmatrix}
			1  & j \\
			-j  & 1
		\end{bmatrix} \right) = \frac{2N}{\sigma^2}\bm{I}_2.
	\end{align}
	Then, $\bm{\Lambda}$ in \eqref{Lambda} can be further simplified as
	\begin{align}
		\bm{\Lambda} &= \left[\bm{J}_{\bm{\chi},\bm{\chi}} - \bm{J}_{\bm{\chi},\bm{\beta}} \bm{J}_{\bm{\beta},\bm{\beta}}^{-1} \bm{J}_{\bm{\chi},\bm{\beta}}^{\mathsf T}\right]^{-1} \\
		&= \Bigg[\frac{8\pi^2|\tilde{\beta}|^2}{\sigma^2\lambda^2}\begin{bmatrix}
			\|\tilde{\bm{f}}\|_2^2 &  \tilde{\bm{f}}^{\mathsf T} \tilde{\bm{g}} \\
			 \tilde{\bm{f}}^{\mathsf T} \tilde{\bm{g}} & \|\tilde{\bm{g}}\|_2^2
		\end{bmatrix} \notag\\
		&~~~~~~ - \frac{\sigma^2}{2N}\left(\frac{4\pi}{\sigma^2\lambda}\right)^2\bm{\zeta}\bar{\bm{\beta}}^{\mathsf T} \bar{\bm{\beta}}\bm{\zeta}^{\mathsf T}\Bigg]^{-1} \notag\\
		&= \frac{\sigma^2\lambda^2N}{8\pi^2|\tilde{\beta}|^2} \Bigg[\begin{bmatrix}
			N\|\tilde{\bm{f}}\|_2^2 & N \tilde{\bm{f}}^{\mathsf T} \tilde{\bm{g}} \\
			N \tilde{\bm{f}}^{\mathsf T} \tilde{\bm{g}} & N\|\tilde{\bm{g}}\|_2^2
		\end{bmatrix} \notag\\
		&~~~~~~ - \begin{bmatrix}
			\textrm{sum}(\tilde{\bm{f}})\textrm{sum}(\tilde{\bm{f}}) & \textrm{sum}(\tilde{\bm{f}})  \textrm{sum}(\tilde{\bm{g}}) \\
			\textrm{sum}(\tilde{\bm{f}})  \textrm{sum}(\tilde{\bm{g}}) &  \textrm{sum}(\tilde{\bm{g}})  \textrm{sum}(\tilde{\bm{g}})
		\end{bmatrix}\Bigg]^{-1} \notag\\
		&\overset{(c_1)}= \frac{\sigma^2\lambda^2N}{8\pi^2|\tilde{\beta}|^2} \begin{bmatrix}
			N^2{\rm var}(\tilde{\bm{f}}) & N^2{\rm cov}(\tilde{\bm{f}}, \tilde{\bm{g}}) \\
			N^2{\rm cov}(\tilde{\bm{f}}, \tilde{\bm{g}}) & N^2{\rm var}(\tilde{\bm{g}})
		\end{bmatrix} ^{-1} \notag\\
		&\overset{(c_2)}= \frac{\sigma^2\lambda^2}{8\pi^2PN|\beta|^2} \frac{1}{{\rm var}(\tilde{\bm{f}}){\rm var}(\tilde{\bm{g}})-{\rm cov}(\tilde{\bm{f}},\tilde{\bm{g}})^2} \notag\\
		&~~~~~~\begin{bmatrix}
			{\rm var}(\tilde{\bm{g}}) & -{\rm cov}(\tilde{\bm{f}},\tilde{\bm{g}}) \\
			-{\rm cov}(\tilde{\bm{f}},\tilde{\bm{g}}) & {\rm var}(\tilde{\bm{f}})
		\end{bmatrix},\notag
	\end{align}
	where the equality $(c_1)$ holds because of the definitions of the variance function ${\rm var}(\tilde{\bm{f}})$ and the covariance function ${\rm cov}(\tilde{\bm{f}},\tilde{\bm{g}})$ in \eqref{U}, the equality $(c_2)$ holds since $\tilde{\beta} = \beta s$ and using the inversion formula for a $2\times2$ matrix, i.e., $\begin{bmatrix}
		a & b \\
		c & d
	\end{bmatrix}^{-1}=\frac{1}{ad-bc}\begin{bmatrix}
		d & -b \\
		-c & a
	\end{bmatrix}$. Moreover, since $\|\tilde{\bm{f}}\|_2^2 = \tilde{\bm{f}}^{\mathsf T}\bm{I}_N\tilde{\bm{f}}$ and $\textrm{sum}(\tilde{\bm{f}}) = \tilde{\bm{f}}^{\mathsf T}\bm{1}_N$, we can equivalently rewrite ${\rm var}(\tilde{\bm{f}})$, ${\rm var}(\tilde{\bm{g}})$, and ${\rm cov}(\tilde{\bm{f}},\tilde{\bm{g}})$ into quadratic forms as
	\begin{align}
	{\rm var}(\tilde{\bm{f}}) &= \tilde{\bm{f}}^{\mathsf T} \bm{B} \tilde{\bm{f}} = \bm{f}^{\mathsf T}\bm{R} \bm{B} \bm{R}^{\mathsf T}\bm{f} = \bm{f}^{\mathsf T} \bm{U} \bm{f}, \\
	{\rm var}(\tilde{\bm{g}}) &= \tilde{\bm{g}}^{\mathsf T} \bm{B} \tilde{\bm{g}} = (\sin\theta)^2 \bm{g}^{\mathsf T}\bm{R} \bm{B} \bm{R}^{\mathsf T}\bm{g} = (\sin\theta)^2 \bm{g}^{\mathsf T} \bm{U} \bm{g}, \notag\\
	{\rm cov}(\tilde{\bm{f}},\tilde{\bm{g}}) &= \tilde{\bm{f}}^{\mathsf T} \bm{B} \tilde{\bm{g}} = \sin\theta \bm{f}^{\mathsf T}\bm{R} \bm{B} \bm{R}^{\mathsf T}\bm{g} = \sin\theta \bm{f}^{\mathsf T} \bm{U} \bm{g}, \notag
	\end{align}
	where $\bm{B}$ defined in \eqref{B} is a PSD matrix.	Finally, the CRB expression of $\bm{\chi}$ can be written as
	\begin{align}\label{CRB_app}
		&{\rm CRB}_\theta(\bm{R})=\bm{\Lambda}[1,1] \\
		&~~= \frac{\sigma^2\lambda^2}{8\pi^2PN|\beta|^2} \frac{{\rm var}(\tilde{\bm{g}})}{{\rm var}(\tilde{\bm{f}}){\rm var}(\tilde{\bm{g}})-{\rm cov}(\tilde{\bm{f}},\tilde{\bm{g}})^2} \notag\\
		&~~=\frac{\sigma^2\lambda^2}{8\pi^2PN|\beta|^2} \frac{\bm{g}^{\mathsf T} \bm{U} \bm{g}}{(\bm{f}^{\mathsf T} \bm{U} \bm{f})(\bm{g}^{\mathsf T} \bm{U} \bm{g})-(\bm{f}^{\mathsf T} \bm{U} \bm{g})^2},\notag\\
		&{\rm CRB}_\phi(\bm{R})=\bm{\Lambda}[2,2] \notag\\
		&~~= \frac{\sigma^2\lambda^2}{8\pi^2PN|\beta|^2} \frac{{\rm var}(\tilde{\bm{f}})}{{\rm var}(\tilde{\bm{f}}){\rm var}(\tilde{\bm{g}})-{\rm cov}(\tilde{\bm{f}},\tilde{\bm{g}})^2} \notag\\
		&~~= \frac{\sigma^2\lambda^2}{8\pi^2PN|\beta|^2} \frac{1}{(\sin\theta)^2} \frac{\bm{f}^{\mathsf T} \bm{U} \bm{f}}{(\bm{f}^{\mathsf T} \bm{U} \bm{f})(\bm{g}^{\mathsf T} \bm{U} \bm{g})-(\bm{f}^{\mathsf T} \bm{U} \bm{g})^2}.	\notag
	\end{align}
	This thus completes the proof of Theorem 1.

	\subsection{Proof of Theorem 2}
	
	We first prove the sufficiency. Suppose $\bm{U} = \tau\bm{I}_3$. Since $\bm{f}$ and $\bm{g}$ are orthonormal vectors, i.e., $\|\bm{f}\|_2 = \|\bm{g}\|_2 = 1$ and $\bm{f}^{\mathsf T}\bm{g} = 0$, we have
	\begin{align}\label{fgI3}
		\bm{f}^{\mathsf T} \bm{U} \bm{f} &= \tau \bm{f}^{\mathsf T} \bm{I}_3 \bm{f} = \tau \|\bm{f}\|_2^2 = \tau, \\
		\bm{g}^{\mathsf T} \bm{U} \bm{g} &= \tau \bm{g}^{\mathsf T} \bm{I}_3 \bm{g} = \tau \|\bm{g}\|_2^2 = \tau, \notag\\
		\bm{f}^{\mathsf T} \bm{U} \bm{g} &= \tau \bm{f}^{\mathsf T} \bm{I}_3 \bm{g} = \tau \bm{f}^{\mathsf T} \bm{g} = 0. \notag
	\end{align}
	Substituting \eqref{fgI3} into the definition of ${\rm MSAEB}_{\bm{\chi}}(\bm{R})$ in \eqref{MSAEB}, we obtain ${\rm MSAEB}_{\bm{\chi}}(\bm{R}) = \frac{2\rho}{\tau}$,
	which is a constant independent of the direction $\bm{\chi}$. Thus, the sufficiency is proved.
	
	Next, we prove the necessity. Assume that ${\rm MSAEB}_{\bm{\chi}}(\bm{R})$ is constant for all $\bm{\chi} \in [0,\pi] \times [0, 2\pi]$. Since $\bm{U}$ is a real symmetric matrix, it admits the eigenvalue decomposition $\bm{U} = \bm{E} \bm{\Sigma} \bm{E}^{\mathsf T}$, where $\bm{\Sigma} = \text{diag}([\lambda_1, \lambda_2, \lambda_3]^{\mathsf T})$ contains the eigenvalues, and $\bm{E} = [\bm{e}_1, \bm{e}_2, \bm{e}_3]\in\mathbb{R}^{3 \times 3}$ is an orthogonal matrix containing the corresponding eigenvectors. In the following, we evaluate $F_{\bm{\chi}}(\bm{R})$ along the principal axes defined by the eigenvectors. First, consider the direction aligned with the third eigenvector, i.e., $\bm{\eta} = \bm{e}_3$. Since $\bm{f}^{\mathsf T}\bm{\eta} = \bm{g}^{\mathsf T}\bm{\eta} = \bm{f}^{\mathsf T}\bm{g} = 0$ and $\|\bm{f}\|_2 = \|\bm{g}\|_2 = \|\bm{\eta}\|_2 = 1$, the set $\{\bm{f}, \bm{g}, \bm{\eta}\}$ forms an orthonormal basis. Moreover, since $\{\bm{f}, \bm{g}\}$ and $\{\bm{e}_1, \bm{e}_2\}$ are two orthonormal bases of the same subspace orthogonal to $\bm{\eta} = \bm{e}_3$, $[\bm{f}, \bm{g}]$ can be written as
	\begin{align}
		[\bm{f}, \bm{g}] = \bar{\bm{Q}} [\bm{e}_1, \bm{e}_2],
	\end{align}
	where $\bar{\bm{Q}} = \bm{E} \bm{\Psi} \bm{E}^{\mathsf T}$ is the rotation matrix and $\bm{\Psi} =
	\begin{bmatrix}
		\cos\psi & -\sin\psi & 0 \\
		\sin\psi & \cos\psi & 0 \\
		0 & 0 & 1
	\end{bmatrix}$ is the canonical rotation matrix representing a rotation by an angle $\psi$ w.r.t. the axis $\bm{e}_3$. Thus, we can verify $\bm{\eta}=\bm{e}_3$ as
	\begin{align}
		\bm{\eta} &= \bar{\bm{Q}} \bm{e}_3 = (\bm{E} \bm{\Psi} \bm{E}^{\mathsf T}) \bm{e}_3 = \bm{E} \bm{\Psi} (\bm{E}^{\mathsf T} \bm{e}_3) \\ 
		&= \bm{E} \bm{\Psi} [0,0,1]^{\mathsf T} = \bm{E} [0,0,1]^{\mathsf T} = \bm{e}_3. \notag
	\end{align}
	Moreover, $\bm{f}$ and $\bm{g}$ can be further written as
	\begin{align}
		\bm{f} &= \bar{\bm{Q}} \bm{e}_1 = (\bm{E} \bm{\Psi} \bm{E}^{\mathsf T}) \bm{e}_1 = \bm{E} \bm{\Psi} (\bm{E}^{\mathsf T} \bm{e}_1) \\ 
		&= \bm{E} \bm{\Psi} [1,0,0]^{\mathsf T} = \bm{E} [\cos\psi,\sin\psi,0]^{\mathsf T}\notag\\ 
		&= \cos\psi \bm{e}_1 + \sin\psi \bm{e}_2, \notag
	\end{align}
	\begin{align}
		\bm{g} &= \bar{\bm{Q}} \bm{e}_2 = (\bm{E} \bm{\Psi} \bm{E}^{\mathsf T}) \bm{e}_2 = \bm{E} \bm{\Psi} (\bm{E}^{\mathsf T} \bm{e}_2) \\ 
		&= \bm{E} \bm{\Psi} [0,1,0]^{\mathsf T} = \bm{E} [-\sin\psi,\cos\psi,0]^{\mathsf T}\notag\\ 
		&= -\sin\psi \bm{e}_1 + \cos\psi \bm{e}_2. \notag
	\end{align}
	Then, using the spectral decomposition $\bm{U} = \sum_{i=1}^3 \lambda_i \bm{e}_i \bm{e}_i^{\mathsf T}$, and the orthonormality property $\bm{e}_i^{\mathsf T} \bm{e}_j = 
	\begin{cases} 
		1, & i=j, \\ 
		0, & i\neq j. 
	\end{cases}$, $\bm{f}^{\mathsf T} \bm{U} \bm{f}$, $\bm{g}^{\mathsf T} \bm{U} \bm{g}$, and $\bm{f}^{\mathsf T} \bm{U} \bm{g}$ can be further written as
	\begin{align}
		\bm{f}^{\mathsf T} \bm{U} \bm{f} &= \lambda_1 (\cos\psi)^2 + \lambda_2 (\sin\psi)^2, \\
		\bm{g}^{\mathsf T} \bm{U} \bm{g} &= \lambda_1 (\sin\psi)^2 + \lambda_2 (\cos\psi)^2, \notag\\
		\bm{f}^{\mathsf T} \bm{U} \bm{g} &= (\lambda_2 - \lambda_1)\sin\psi\cos\psi. \notag
	\end{align}
	Therefore, $F_{\bm{\chi}}(\bm{R})\big|_{\bm{\eta}=\bm{e}_3}$ can be simplified as
	\begin{align}
		F_{\bm{\chi}}(\bm{R})\big|_{\bm{\eta}=\bm{e}_3} &= \frac{\bm{g}^{\mathsf T} \bm{U} \bm{g} + \bm{f}^{\mathsf T} \bm{U} \bm{f}}{(\bm{f}^{\mathsf T} \bm{U} \bm{f})(\bm{g}^{\mathsf T} \bm{U} \bm{g})-(\bm{f}^{\mathsf T} \bm{U} \bm{g})^2} \\
		&= \frac{\lambda_1 + \lambda_2}{\lambda_1 \lambda_2} = \frac{1}{\lambda_1} + \frac{1}{\lambda_2}. \notag
	\end{align}
	Similarly, by aligning the target direction $\bm{\eta}$ with $\bm{e}_1$ and $\bm{e}_2$, respectively, and following steps analogous to those used to derive $F_{\bm{\chi}}(\bm{R})\big|_{\bm{\eta}=\bm{e}_3}$, we obtain
	\begin{align}
		F_{\bm{\chi}}(\bm{R})\big|_{\bm{\eta}=\bm{e}_1} = \frac{1}{\lambda_2} + \frac{1}{\lambda_3}, \quad
		F_{\bm{\chi}}(\bm{R})\big|_{\bm{\eta}=\bm{e}_2} = \frac{1}{\lambda_1} + \frac{1}{\lambda_3}. 
	\end{align}
	Due to the isotropy of ${\rm MSAEB}_{\bm{\chi}}(\bm{R})$, $F_{\bm{\chi}}(\bm{R})\big|_{\bm{\eta}=\bm{e}_1}$, $F_{\bm{\chi}}(\bm{R})\big|_{\bm{\eta}=\bm{e}_2}$, and $F_{\bm{\chi}}(\bm{R})\big|_{\bm{\eta}=\bm{e}_3}$ must be equal, i.e.,
	\begin{equation}
		\frac{1}{\lambda_1} + \frac{1}{\lambda_2} = \frac{1}{\lambda_2} + \frac{1}{\lambda_3} = \frac{1}{\lambda_1} + \frac{1}{\lambda_3},
	\end{equation}
	where the solution is $\lambda_1 = \lambda_2 = \lambda_3 \triangleq \tau$.
	Consequently, the APV covariance matrix $\bm{U}$ is given by
	\begin{align}
		\bm{U} = \bm{E} (\tau \bm{I}_3) \bm{E}^{\mathsf T} = \tau \bm{E} \bm{E}^{\mathsf T} = \tau \bm{I}_3.
	\end{align}
	This thus completes the proof of Theorem 2.

	\subsection{Proof of Theorem 3}
	
	When the MA trajectory is restricted in the $x$-$y$ plane, we have ${\rm var}(\bm{z}) = 0$, ${\rm cov}(\bm{x},\bm{z}) = 0$, and ${\rm cov}(\bm{y},\bm{z}) = 0$. Consequently, the APV covariance matrix $\bm{U}$ reduces to
	\begin{align}\label{U_xy_plane}
		\bm{U} = \begin{bmatrix}
			{\rm var}(\bm{x}) & {\rm cov}(\bm{x},\bm{y}) & 0 \\
			{\rm cov}(\bm{x},\bm{y}) & {\rm var}(\bm{y}) & 0 \\
			0 & 0 & 0
		\end{bmatrix}.
	\end{align}
	Substituting \eqref{U_xy_plane} into $\bm{g}^{\mathsf T} \bm{U} \bm{g}$, $\bm{f}^{\mathsf T} \bm{U} \bm{f}$, and $\bm{f}^{\mathsf T} \bm{U} \bm{g}$, we obtain
	\begin{align}
		\bm{g}^{\mathsf T} \bm{U} \bm{g} &= (\sin\phi)^2 {\rm var}(\bm{x}) + (\cos\phi)^2 {\rm var}(\bm{y}) \notag\\
		&~~~~~~ - 2\sin\phi\cos\phi {\rm cov}(\bm{x},\bm{y}), \\
		\bm{f}^{\mathsf T} \bm{U} \bm{f} &= (\cos\theta)^2 \big[ (\cos\phi)^2 {\rm var}(\bm{x}) + (\sin\phi)^2 {\rm var}(\bm{y}) \notag\\
		&~~~~~~ + 2\sin\phi\cos\phi {\rm cov}(\bm{x},\bm{y}) \big], \notag\\
		\bm{f}^{\mathsf T} \bm{U} \bm{g} &= \cos\theta \big[ -\sin\phi\cos\phi {\rm var}(\bm{x}) + \sin\phi\cos\phi {\rm var}(\bm{y}) \notag\\
		&~~~~~~+ ((\cos\phi)^2 - (\sin\phi)^2){\rm cov}(\bm{x},\bm{y}) \big]. \notag
	\end{align}
	For notational brevity, we define three terms that capture the azimuth dependence but are independent of $\theta$:
	\begin{align}
		U_{gg}(\phi) \triangleq \bm{g}^{\mathsf T} \bm{U} \bm{g}, 
		~U_{ff}(\phi) \triangleq \frac{\bm{f}^{\mathsf T} \bm{U} \bm{f}}{(\cos\theta)^2}, 
		~U_{fg}(\phi) \triangleq \frac{\bm{f}^{\mathsf T} \bm{U} \bm{g}}{\cos\theta}. 
	\end{align}
	Then, the function $F_{\bm{\chi}}(\bm{R})$ can be rewritten as
	\begin{align}
		F_{\bm{\chi}}(\bm{R}) &=  \frac{1}{(\cos\theta)^2} \frac{U_{gg}(\phi)}{U_{ff}(\phi) U_{gg}(\phi) - U_{fg}(\phi)^2} \notag\\
		&~~~~+ \frac{U_{ff}(\phi)}{U_{ff}(\phi) U_{gg}(\phi) - U_{fg}(\phi)^2}. \notag
	\end{align}
	Let $A(\phi) \triangleq \frac{U_{gg}(\phi)}{U_{ff}(\phi) U_{gg}(\phi) - U_{fg}(\phi)^2}$ and $B(\phi) \triangleq \frac{U_{ff}(\phi)}{U_{ff}(\phi) U_{gg}(\phi) - U_{fg}(\phi)^2}$. For a fixed azimuth angle $\phi$, both $A(\phi)$ and $B(\phi)$ are strictly positive constants. Thus, $F_{\bm{\chi}}(\bm{R})$ can be simplified as
	\begin{align}
		F_{\bm{\chi}}(\bm{R}) = \frac{A(\phi)}{(\cos\theta)^2} + B(\phi).
	\end{align}
	Since the function $(\cos\theta)^2$ monotonically decreases from $1$ to $0$ for $\theta \in [0, \pi/2)$, $F_{\bm{\chi}}(\bm{R})$ monotonically increases w.r.t. $\theta$ and approaches infinity as $\theta \to \pi/2$. Conversely, for $\theta \in (\pi/2, \pi]$, $F_{\bm{\chi}}(\bm{R})$ monotonically decreases w.r.t. $\theta$. This thus completes the proof of Theorem 3.

	\bibliographystyle{IEEEtran}
	\bibliography{IEEEabrv,IEEEexample}

\begin{thebibliography}{10}
\providecommand{\url}[1]{#1}
\csname url@samestyle\endcsname
\providecommand{\newblock}{\relax}
\providecommand{\bibinfo}[2]{#2}
\providecommand{\BIBentrySTDinterwordspacing}{\spaceskip=0pt\relax}
\providecommand{\BIBentryALTinterwordstretchfactor}{4}
\providecommand{\BIBentryALTinterwordspacing}{\spaceskip=\fontdimen2\font plus
\BIBentryALTinterwordstretchfactor\fontdimen3\font minus
  \fontdimen4\font\relax}
\providecommand{\BIBforeignlanguage}[2]{{%
\expandafter\ifx\csname l@#1\endcsname\relax
\typeout{** WARNING: IEEEtran.bst: No hyphenation pattern has been}%
\typeout{** loaded for the language `#1'. Using the pattern for}%
\typeout{** the default language instead.}%
\else
\language=\csname l@#1\endcsname
\fi
#2}}
\providecommand{\BIBdecl}{\relax}
\BIBdecl

\bibitem{jiang2021the}
W.~Jiang, B.~Han, M.~A. Habibi, and H.~D. Schotten, ``{The road towards 6G: A
  comprehensive survey},'' \emph{{IEEE} Open J. Commun. Soc.}, vol.~2, pp.
  334--366, Feb. 2021.

\bibitem{mailloux2005phased}
R.~J. Mailloux, \emph{{Phased Array Antenna Handbook}}.\hskip 1em plus 0.5em
  minus 0.4em\relax 2nd ed. Norwood, MA, USA: Artech House, 2005.

\bibitem{roberts2011sparse}
W.~Roberts, L.~Xu, J.~Li, and P.~Stoica, ``{Sparse antenna array design for
  MIMO active sensing applications},'' \emph{{IEEE} Trans. Antennas Propagat.},
  vol.~59, no.~3, pp. 846--858, Mar. 2011.

\bibitem{zhu2023MAMag}
L.~Zhu, W.~Ma, and R.~Zhang, ``Movable antennas for wireless communication:
  Opportunities and challenges,'' \emph{IEEE Commun. Mag.}, vol.~62, no.~6, pp.
  114--120, Jun. 2024.

\bibitem{zhu2025tutorial}
L.~Zhu, W.~Ma, W.~Mei, Y.~Zeng, Q.~Wu, B.~Ning, Z.~Xiao, X.~Shao, J.~Zhang, and
  R.~Zhang, ``A tutorial on movable antennas for wireless networks,''
  \emph{{IEEE} Commun. Surveys Tuts.}, vol.~28, pp. 3002--3054, Feb. 2025.

\bibitem{zhao2009single}
S.~Zhao, H.~Yang, and H.~Yang, ``Single antenna spatial diversity,'' in
  \emph{Proc. IEEE Int. Conf. Wireless Commun., Netw., Mobile Comput. (WiCOM)},
  Beijing, China, Sep. 2009, pp. 1--4.

\bibitem{zhu2022MAmodel}
{L. Zhu, W. Ma, and R. Zhang}, ``Modeling and performance analysis for movable
  antenna enabled wireless communications,'' \emph{IEEE Trans. Wireless
  Commun.}, vol.~23, no.~6, pp. 6234--6250, Jun. 2024.

\bibitem{mei2024movable}
W.~Mei, X.~Wei, B.~Ning, Z.~Chen, and R.~Zhang, ``Movable-antenna position
  optimization: A graph-based approach,'' \emph{IEEE Wireless Commun. Lett.},
  vol.~13, no.~7, pp. 1853--1857, Jul. 2024.

\bibitem{ning2024movable}
B.~Ning, S.~Yang, Y.~Wu, P.~Wang, W.~Mei, C.~Yuen, and E.~Bj{\"o}rnson,
  ``Movable antenna-enhanced wireless communications: General architectures and
  implementation methods,'' \emph{IEEE Wireless Commun.}, vol.~32, no.~5, pp.
  108--116, Oct. 2025.

\bibitem{tang2024secure}
J.~Tang, C.~Pan, Y.~Zhang, H.~Ren, and K.~Wang, ``Secure {MIMO} communication
  relying on movable antennas,'' \emph{IEEE Trans. Commun.}, vol.~73, no.~4,
  pp. 2159--2175, Apr. 2025.

\bibitem{zhu2023MAmultiuser}
L.~Zhu, W.~Ma, B.~Ning, and R.~Zhang, ``Movable-antenna enhanced multiuser
  communication via antenna position optimization,'' \emph{IEEE Trans. Wireless
  Commun.}, vol.~23, no.~7, pp. 7214--7229, Jul. 2024.

\bibitem{wu2023movable}
Y.~Wu, D.~Xu, D.~W.~K. Ng, W.~Gerstacker, and R.~Schober, ``Movable
  antenna-enhanced multiuser communication: Optimal discrete antenna
  positioning and beamforming,'' in \emph{Proc. IEEE Global Commun. Conf.
  (Globecom)}, Kuala Lumpur, Malaysia, Dec. 2023, pp. 7508--7513.

\bibitem{qin2024antenna}
H.~Qin, W.~Chen, Z.~Li, Q.~Wu, N.~Cheng, and F.~Chen, ``Antenna positioning and
  beamforming design for fluid antenna-assisted multi-user downlink
  communications,'' \emph{IEEE Wireless Commun. Lett.}, vol.~13, no.~4, pp.
  1073--1077, Apr. 2024.

\bibitem{cheng2023sum}
Z.~Cheng, N.~Li, J.~Zhu, X.~She, C.~Ouyang, and P.~Chen, ``Sum-rate
  maximization for movable antenna enabled multiuser communications,''
  \emph{arXiv preprint arXiv:2309.11135}, 2023.

\bibitem{yang2024flexible}
S.~Yang, J.~An, Y.~Xiu, W.~Lyu, B.~Ning, Z.~Zhang, M.~Debbah, and C.~Yuen,
  ``Flexible antenna arrays for wireless communications: Modeling and
  performance evaluation,'' \emph{IEEE Trans. Wireless Commun.}, vol.~24,
  no.~6, pp. 4937--4951, Jun. 2025.

\bibitem{hu2024power}
G.~Hu, Q.~Wu, K.~Xu, J.~Ouyang, J.~Si, Y.~Cai, and N.~Al-Dhahir, ``Fluid
  antennas-enabled multiuser uplink: A low-complexity gradient descent for
  total transmit power minimization,'' \emph{IEEE Commun. Lett.}, vol.~28,
  no.~3, pp. 602--606, Mar. 2025.

\bibitem{li2024minimizing}
Q.~Li, W.~Mei, B.~Ning, and R.~Zhang, ``Minimizing movement delay for movable
  antennas via trajectory optimization,'' in \emph{Proc. IEEE Global Commun.
  Conf. (Globecom)}, Cape Town, South Africa, Dec. 2024, pp. 1--6.

\bibitem{ma2022MAmimo}
W.~Ma, L.~Zhu, and R.~Zhang, ``{MIMO} capacity characterization for movable
  antenna systems,'' \emph{IEEE Trans. Wireless Commun.}, vol.~23, no.~4, pp.
  3392--3407, Apr. 2024.

\bibitem{chen2023joint}
X.~Chen, B.~Feng, Y.~Wu, D.~W.~K. Ng, and R.~Schober, ``Joint beamforming and
  antenna movement design for moveable antenna systems based on statistical
  {CSI},'' in \emph{Proc. IEEE Global Commun. Conf. (Globecom)}, Kuala Lumpur,
  Malaysia, Dec. 2023, pp. 4387--4392.

\bibitem{yeyuqi2023fluid}
Y.~Ye, L.~You, J.~Wang, H.~Xu, K.-K. Wong, and X.~Gao, ``{Fluid
  antenna-assisted MIMO transmission exploiting statistical CSI},'' \emph{IEEE
  Commun. Lett.}, vol.~28, no.~1, pp. 223--227, Jan. 2024.

\bibitem{ma2023MAestimation}
W.~Ma, L.~Zhu, and R.~Zhang, ``Compressed sensing based channel estimation for
  movable antenna communications,'' \emph{IEEE Commun. Lett.}, vol.~27, no.~10,
  pp. 2747--2751, Oct. 2023.

\bibitem{xiao2023channel}
Z.~Xiao, S.~Cao, L.~Zhu, Y.~Liu, B.~Ning, X.-G. Xia, and R.~Zhang, ``Channel
  estimation for movable antenna communication systems: A framework based on
  compressed sensing,'' \emph{IEEE Trans. Wireless Commun.}, vol.~23, no.~9,
  pp. 11\,814--11\,830, Sep. 2024.

\bibitem{zhu2023MAarray}
L.~Zhu, W.~Ma, and R.~Zhang, ``Movable-antenna array enhanced beamforming:
  Achieving full array gain with null steering,'' \emph{IEEE Commun. Lett.},
  vol.~27, no.~12, pp. 3340--3344, Dec. 2023.

\bibitem{ma2024multi}
W.~Ma, L.~Zhu, and R.~Zhang, ``Multi-beam forming with movable-antenna array,''
  \emph{IEEE Commun. Lett.}, vol.~28, no.~3, pp. 697--701, Mar. 2024.

\bibitem{ZhuLP_satellite_MA}
L.~Zhu, X.~Pi, W.~Ma, Z.~Xiao, and R.~Zhang, ``Dynamic beam coverage for
  satellite communications aided by movable-antenna array,'' \emph{IEEE Trans.
  Wireless Commun.}, vol.~24, no.~3, pp. 1916--1933, Mar. 2025.

\bibitem{zhu2024nearfield}
{L. Zhu, W. Ma, Z. Xiao, and R. Zhang}, ``Movable antenna enabled near-field
  communications: Channel modeling and performance optimization,'' \emph{IEEE
  Trans. Commun.}, vol.~73, no.~9, pp. 7240--7256, Sep. 2025.

\bibitem{shao20246DMA}
X.~Shao, Q.~Jiang, and R.~Zhang, ``{6D} movable antenna based on user
  distribution: Modeling and optimization,'' \emph{IEEE Trans. Wireless
  Commun.}, vol.~24, no.~1, pp. 355--370, Jan. 2025.

\bibitem{shao2024discrete}
X.~Shao, R.~Zhang, Q.~Jiang, and R.~Schober, ``{6D} movable antenna enhanced
  wireless network via discrete position and rotation optimization,''
  \emph{IEEE J. Select. Areas Commun.}, vol.~43, no.~3, pp. 674--687, Mar.
  2025.

\bibitem{shao2024Mag6DMA}
X.~Shao and R.~Zhang, ``{6DMA} enhanced wireless network with flexible antenna
  position and rotation: Opportunities and challenges,'' \emph{IEEE Commun.
  Mag.}, vol.~63, no.~4, pp. 121--128, Apr. 2025.

\bibitem{shao2024exploiting}
X.~Shao, R.~Zhang, and R.~Schober, ``Exploiting six-dimensional movable antenna
  for wireless sensing,'' \emph{IEEE Wireless Commun. Lett.}, vol.~14, no.~2,
  pp. 265--269, Feb. 2025.

\bibitem{zhuravlev2015experi}
A.~Zhuravlev, V.~Razevig, S.~Ivashov, A.~Bugaev, and M.~Chizh, ``Experimental
  simulation of multi-static radar with a pair of separated movable antennas,''
  in \emph{Proc. IEEE International Conf. Microwaves Commun. Antennas Electron.
  Syst. (COMCAS)}, Nov. 2015, pp. 1--5.

\bibitem{hinske2008using}
S.~Hinske and M.~Langheinrich, ``Using a movable {RFID} antenna to
  automatically determine the position and orientation of objects on a
  tabletop,'' in \emph{Smart Sensing Context, 3rd Eur. Conf.}\hskip 1em plus
  0.5em minus 0.4em\relax Springer, 2008, pp. 14--26.

\bibitem{ma2024MAsensing}
W.~Ma, L.~Zhu, and R.~Zhang, ``Movable antenna enhanced wireless sensing via
  antenna position optimization,'' \emph{IEEE Trans. Wireless Commun.},
  vol.~23, no.~11, pp. 16\,575--16\,589, Nov. 2024.

\bibitem{chen2025MAISACopt}
L.~Chen, M.-M. Zhao, M.-J. Zhao, and R.~Zhang, ``Antenna position and
  beamforming optimization for movable antenna enabled {ISAC}: Optimal
  solutions and efficient algorithms,'' \emph{{IEEE} Trans. Signal Processing},
  vol.~73, pp. 3812--3828, Jul. 2025.

\bibitem{wang2025MAnearsensing}
Y.~Wang, W.~Mei, X.~Wei, B.~Ning, and Z.~Chen, ``Antenna position optimization
  for movable antenna-empowered near-field sensing,'' \emph{arXiv preprint
  arXiv:2502.03169}, 2025.

\bibitem{mao2025movable}
H.~Mao, L.~Zhu, W.~Ma, Z.~Xiao, X.-G. Xia, and R.~Zhang, ``Movable-antenna
  array enhanced multi-target sensing: {CRB} characterization and
  optimization,'' \emph{arXiv preprint arXiv:2511.18907}, 2025.

\bibitem{ma2025movabletra}
W.~Ma, L.~Zhu, and R.~Zhang, ``Movable-antenna trajectory optimization for
  wireless sensing: {CRB} scaling laws over time and space,'' \emph{arXiv
  preprint arXiv:2509.14905}, 2025.

\bibitem{nehorai1999performance}
A.~Nehorai and M.~Hawkes, ``Performance measures for estimating vector
  systems,'' in \emph{Proc. IEEE Int. Conf. Acoust., Speech, Signal Process.
  (ICASSP)}, Phoenix, AZ, USA, Mar. 1999, pp. 1829--1832.

\bibitem{deits2015computing}
R.~Deits and R.~Tedrake, ``{Computing large convex regions of obstacle-free
  space through semidefinite programming},'' in \emph{Algorithmic Foundations
  of Robotics XI}, vol. 107.\hskip 1em plus 0.5em minus 0.4em\relax Springer,
  Jan. 2015, pp. 109--124.

\bibitem{grantcvx}
M.~Grant and S.~Boyd, \emph{{(Mar. 2014). CVX: MATLAB Software for Disciplined
  Convex Programming, Version 2.1}}, [Online]. Available: http://cvxr.com/cvx.

\bibitem{zheng2017generalized}
W.~Zheng, X.~Zhang, and H.~Zhai, ``Generalized coprime planar array geometry
  for 2-{D} {DOA} estimation,'' \emph{IEEE Commun. Lett.}, vol.~21, no.~5, pp.
  1075--1078, May 2017.

\end{thebibliography}
	
\end{document}